\documentclass[11pt,a4paper]{article}

\usepackage[utf8]{inputenc}
\usepackage[margin=1in]{geometry}
\usepackage{amsmath,amssymb,amsthm,mathrsfs}
\usepackage{tabularx}
\usepackage{listings}
\usepackage{microtype}
\usepackage[colorlinks=true,linkcolor=blue,citecolor=blue,urlcolor=blue]{hyperref}

\newtheorem{defn}{Definition}
\newtheorem{remark}{Remark}
\newtheorem{theorem}{Theorem}
\newtheorem{proposition}{Proposition}

\newcommand{\ModkP}{\mathbf{FMod_{\text{m}}P}}
\newcommand{\tu}[1]{\mathbf{#1}}
\newcommand{\fun}[1]{\mathsf{#1}}
\newcommand{\cc}[1]{\mathbf{#1}}
\newcommand{\ODEcmod}{\text{ODE}_{\text{cmod\text{-}m}}}
\newcommand{\coNP}{\mathbf{coNP}}
\newcommand{\FP}{\mathbf{FP}}
\newcommand{\AC}{\mathbf{FAC}}
\newcommand{\NC}{\mathbf{FNC}}
\newcommand{\Nat}{\mathbb{N}}
\newcommand{\sP}{\mathbf{\#P}}
\newcommand{\oP}{\mathbf{\oplus P}}
\newcommand{\TC}{\mathbf{FTC}}
\newcommand{\FPH}{\mathbf{FPH}}
\newcommand{\FCH}{\mathbf{FCH}}
\newcommand{\NP}{\mathbf{NP}}
\newcommand{\acc}{\mathtt{acc}}

\newcommand{\ACDL}{\mathbb{ACDL}}

\begin{document}

% --- Title & Authors ---
\title{\textbf{Towards a Characterization of Counting and Alternating Classes via Discrete Ordinary Differential Equations: Ongoing Research Report}}

\author{
  \textbf{Melissa Antonelli}\thanks{ORCID: 0009-0006-9072-4847. Email: \texttt{melissa.antonelli@uni-tuebingen.de}} \\
  \small Carl Friedrich von Weizs\"acker-Zentrum, Universit\"at T\"ubingen, Germany
  \and
  \textbf{Eduardo Skapinakis}\thanks{ORCID: 0000-0002-5729-1078. Email: \texttt{eduardo.skapinakis@uni-tuebingen.de}} \\
  \small Carl Friedrich von Weizs\"acker-Zentrum, Universit\"at T\"ubingen, Germany \\
  \small Center for Mathematics and Applications (NOVA Math), NOVA FCT, Portugal
}

\date{}

\maketitle

% --- Abstract ---
\begin{abstract}
This paper presents a high-level report on an ongoing project aiming to leverage implicit approaches based on discrete ordinary differential equations (ODEs) to study multiple complexity classes, even beyond small circuit and polynomial-time classes.
Stimulated by recent ODE-based characterizations of polynomial-time functions ($\mathbf{FP}$) and classes over the reals, the research project outlined here pushes this investigation further into counting and alternation. Specifically, we present a uniform framework, built upon a single base algebra and a unified family of schemas, where complexity levels, such as those of the polynomial and counting hierarchies, are captured simply by the nesting depth of ODE operators. 
Crucially, our approach starts from a base class much weaker than $\mathbf{FP}$, thus strengthening existing recursion-theoretic treatments and establishing a natural connection to descriptive complexity. 
Moreover, by isolating three elementary schemas, our framework makes the computational content of linearity restrictions completely transparent while extending ODE-based implicit complexity to previously unaddressed counting classes, such as $\oplus\mathbf{P}$.
More generally, this work establishes a clear bridge between differentiation and counting, offering a fresh perspective on the relationships between different complexity classes, which remains the object of ongoing and future research.
\end{abstract}

\noindent\textbf{Keywords:} Implicit computational complexity, function algebras, alternating classes, counting classes, discrete ordinary differential equations

\vspace{0.5cm}

%----------------------------------------------------%
\section{Introduction}
%----------------------------------------------------%
In recent years, the use of ordinary differential equations (ODEs) as a framework to design algorithms and characterize various complexity classes has seen a resurgence of interest.
Given their inherent connections to analog computation, ODEs have been used to characterize complexity over the reals~\cite{BG16,BGP,blanc:23}, and, more recently, their discrete counterparts have also been shown to handle standard computational models~\cite{BournezDurand,MFCS24}.
This establishes the ODE framework as a promising and versatile tool for machine-independent complexity, the branch of complexity theory dedicated to characterizing functions without relying on specific machines or cost models (see e.g.~\cite{cobham:65,leivant:94,fagin:74,immerman:12,Buss,Girard}).

The seminal work introducing \emph{discrete} ODEs in complexity established a characterization for the class of poly-time computable functions ($\FP$)~\cite{BournezDurand}.
In a series of subsequent works~\cite{MFCS24,MFCS25,ICALP}, this approach has also proven fruitful for studying small circuit classes, revealing one of the key advantages of ODEs:
the interplay of multiple parameters (such as discrete vs. continuous inputs, the choice of functions to derive along, the syntactical form of the equations, and constraints on their calls) enables a flexible characterization of (circuit) computational resources.
The research project delineated here aims to push this investigation further by exploring the expressive power of discrete ODEs in a different direction, namely by considering classes \emph{above} $\FP$.
This demonstrates the versatility of this mathematical tool by showing that they can naturally capture resources related to different machine models, such as nondeterministic and counting machines, thereby providing a comprehensive and new perspective on complexity.

In this paper, we outline the framework needed to pursue this \emph{systematic} and \emph{uniform} investigation of complexity via discrete ODEs, highlighting the advantages of this approach and the core strategies used to achieve it.
In Section~\ref{sec:ODEs}, we cursorily recall the foundations of the ODE-based approach in complexity (for full details, see e.g.~\cite{BournezDurand,TCS}).
In Section~\ref{sec:key}, we present the key ideas underlying our unified investigation of the polynomial and counting hierarchies.
In Section~\ref{sec:overview}, we provide a \emph{high-level} overview of the primary characterizations established in this project, presenting a global map of the discrete characterizations investigated so far, along with a couple of case studies.
We conclude by pointing out open directions for this ongoing work.

%----------------------------------------------------%
\section{A Brief Review of Discrete ODEs in Complexity}\label{sec:ODEs}
%----------------------------------------------------%

As is standard, we define ODEs as expressions of the form $\frac{\partial f(x,\tu y)}{\partial x}=h(x,\tu y, f(x,\tu y))$, where $\frac{\partial f(x,\tu y)}{\partial x}$ is the partial derivative of $f(x,\tu y)$ with respect to $x$ (being $\tu y$ fixed).
When an initial condition $f(0,\tu y)=g(\tu y)$ is added, we obtain a so-called \emph{initial value problem} (IVP).
Here, we focus on the discrete counterpart of ODEs, such that the discrete derivative of $f(x)$ is defined as $f(x+1)-f(x)$.

While these objects are already sufficient to describe relevant classes of functions, such as primitive recursive ones~\cite{BournezDurand}, in order to capture \emph{complexity} classes two novel concepts are introduced into the standard setting of difference calculus:
\begin{itemize}
\itemsep0em
    \item[a.] the idea of deriving along functions, enabling to control the number of recursion steps;
    \item[b.] a syntactic form of the equation, enabling to control the object size.
\end{itemize}
These two simple features allow for the definition of schemas which do not impose explicit bounds on the recursion (as in~\cite{cobham:65}) or assign special roles to variables (as in~\cite{BellantoniCook,leivant:94a,leivant:94}).
Formally, (a.) is realized through the notion of $\lambda$-ODE.

\begin{defn}[$\lambda$-ODE schema]
    Given $g,\lambda, u$, the function $f$ is said to be defined by $\lambda$-ODE from $g$ and $u$ if it is the solution of the IVP with initial value $f(0,\tu y) = g(\tu y)$ and such that:
    \[
    \frac{\partial f(x,\tu y)}{\partial \lambda} = \frac{\partial f(x,\tu y)}{\partial \lambda (x,\tu y)} = u(x,\tu y, f(x,\tu y)),
    \]
    being a formal synonym of $f(x+1,\tu y)=f(x,\tu y)+ \big(\lambda(x+1,\tu y)-\lambda(x,\tu y)\big) \times u(x,\tu y,f(x,\tu y))$.
    When $\lambda$ is the length function $\ell$ (s.t.~$\ell(0)=0$ and $\ell(x) = \lceil \log_2(x+1)\rceil$) this is called length-ODE.
\end{defn}

\noindent
This ensures that the value of $f(x,\tu y)$ changes only when the value of $\lambda(x,\tu y)$ does, thereby linking the number of recursion steps directly to the growth rate of the function along which we are deriving. Specifically, $\ell$ changes only when the length of its input increases, leading to an approximately logarithmic number of steps.

To control function growth, we apply linearity-based syntactical restrictions; these ensure that the output of recursive calls is strictly bounded at each step, preventing growth explosions.

\begin{defn}[Linear $\lambda$-ODE]\label{def:ODE}
Given $g,\lambda,h$, the function $f$ is said to be defined by $\lambda$-ODE from $g$ and $h$ if it is the solution of the IVP with initial value $f(0,\tu y)=g(\tu y)$ and such that:
\[
\frac{\partial f(x,\tu y)}{\partial \lambda} = A(x,\tu y, f(x,\tu y), h(x,\tu y)) \times f(x,\tu y) + B(x,\tu y, f(x,\tu y), h(x,\tu y))
\]
where $A$ and $B$ are expressions over the signature $\{+,-,\times, \fun{sg}\}$ (being $\fun{sg}:\mathbb{Z}\to \mathbb{Z}$ the sign function over $\mathbb{Z}$ taking value 1 for $x>0$ and 0 otherwise) in which $f$ occurs only under the scope of the sign function. 
If no call to $f$ occurs in $A$ and $B$, the schema is said to be \emph{strict}.
\end{defn}

\noindent
Recall that when dealing with linear systems, $f(x,\tu y)=\sum^{x-1}_{u=-1} \prod^{x-1}_{t=u+1} (1+A(t,\tu y, h,f)) \times B(u,\tu y, h, f)$, with $\prod^{x-1}_x \kappa(x)=1$ and $B(-1, \cdot)=f(0,\tu y)$. For further details, see~\cite{BournezDurand}. 

So-called linear length ODE, $\ell$-ODE, is defined by putting these two features together.
This schema was the crucial novelty introduced to capture poly-time computation, offering the very first characterization of~$\FP$ in terms of $\mathbb{LDL}$, an ODE-based function algebra made of standard basic functions ($\fun{0}, \fun{1}, \pi^p_i, \ell, \fun{sg}, +, -, \times$) and closed under composition ($\circ$) and $\ell$-ODE.
Subsequently, it was shown that discrete ODEs also provide a natural tool for expressing core operations of circuit-based computation, such as basic right-shifting or left-shifting with bit addition. This led to multiple original characterizations for classes ranging from those computable by unbounded Boolean circuits of polynomial size and constant depth ($\AC^0$) to those of logarithmic depth ($\NC^1$), also including various counting gates~\cite{MFCS24,MFCS25,ICALP}; notably, most of these latter classes had hitherto lacked implicit characterizations. 
Recall in particular that $\AC^0$ was characterized by a function algebra called $\mathbb{ACDL}$, defined by standard basic functions and closed under composition and under a special schema allowing for iterated left-shifting, with possible bit addition (see~\cite{MFCS24,MFCS25}).

Motivated by the simplicity with which ODE schemas capture crucial operations like (sharply) bounded search and counting, our project aims to extend this approach to complexity beyond $\FP$, thereby simplifying and unifying the characterizations of these classes as well.

%----------------------------------------------------%
\section{Uniformly Reasoning about Computation Paths}\label{sec:key}
%----------------------------------------------------%
In this Section, we introduce the intuitive ideas at the basis of our proposal for a uniform characterization of multiple classes, including all levels in the polynomial and counting hierarchies.

In recursion theory, complexity classes related to notions of \textit{alternation} and \textit{counting} have been mostly captured by forms of \emph{tree recursion} with parameter substitution (see, e.g.~\cite{leivant:94,leivant:00,oitavem:08,dal:22,oitavem:22}) or restricted forms of \emph{primitive recursion} (see, e.g.~\cite{thompson:72,wagner86,ben:12}).
In both cases, one uses recursion schemes that mimic the traversal of a \textit{computation tree}: from its root to its leaves for tree recursion (on notation), and from the last leaf back to the root for primitive recursion.

This approach to recursion can be naturally viewed as a form of discrete differentiation.
Thus, our key strategy is to combine the recognition of accepting computation paths (already doable in $\AC^0$) with mechanisms for searching and counting over them, tools precisely offered by ODE-schemas (Def.~\ref{def:ODE}). 
More generally, this framework makes it possible to capture a wide spectrum of classes, ranging from small circuits to polynomial space, by relying on \emph{the same function algebra} and by introducing two simple modifications to the core defining schemas: (A) a restricted form of composition (Def.~\ref{def:limitedComp}) and (B) specific constraints on the linearity of ODEs.

\paragraph{A. Restricted composition.}
As is standard in recursion theory~\cite{cobham:65,CloteTakeuti}, our characterizations are in terms of function algebras.
However, in order to uniformly capture hierarchy levels, we rely on a special form of restricted composition, inspired by~\cite{oitavem:11,oitavem:22,dal:22}.
This addresses the fact that classes like $\cc{NP}$ are not known to be closed under complement, and thus also not under composition.
For example, let $\fun{NEG}(x):=1-x$ and suppose that $A\in\cc{NP}$. 
Then, while $1_B(x):=1_A(\fun{NEG}(x))$ is still in $\cc{NP}$, $1_{A^c}(x) = \fun{NEG}(1_A(x))$ is in $\cc{coNP}$.
The solution to remain in $\cc{NP}$ is thus to restrict the \textit{inner} functions that can be used in composition.

\begin{defn}\label{def:limitedComp}
Given a (base) algebra $\mathscr{B}$ and a set of operators $\mathcal{O}_1,\dots,\mathcal{O}_k$, the algebra $\mathscr{F} = [\mathscr{B}; \circ_0, \mathcal{O}_1, \dots, \mathcal{O}_k]$ is defined as the standard algebra $[\mathscr{B}; \circ, \mathcal{O}_1, \dots, \mathcal{O}_k]$ but with $f$ being obtained by composition, $f(x,\tu y)=h(\tu{g}(x,\tu y))$, only when all (inner) functions in $\tu g$ are in $\mathscr{B}$.
\end{defn}

If the base class $\mathscr{B}$ is itself defined via operators $\mathcal{O}_1',\dots,\mathcal{O}_m'$, these cannot be used in $[\mathscr{B}; \circ, \mathcal{O}_1, \dots, \mathcal{O}_k]$, i.e.,~one has access to functions in $\mathscr{B}$, but not to operators defining them. This restriction allows for fine-grained control over the \textit{nested} application of certain schemas. 

\paragraph{B. Some paradigmatic ODE schemas.}
We consider strict ODE-based schemas obtained by imposing various restrictions on the linearity of the defining equation. Specifically, when $A$ takes values only in $\{-1,0\}$ and $B=0$, the computation corresponds to universal bounded search; when $A=-B$ and $B$ takes values in $\{0,1\}$, it corresponds to existential bounded search; if $A=0$ and $B$ takes values in $\{0,1\}$, counting is obtained.

\begin{defn}\label{def:parODE}
   Let $g:\Nat^p \to \{0,1\}, k_1:\Nat^{p+1}\to \{-1,0\}$ and $k_2:\Nat^{p+1}\to \{0,1\}$.
   The function $f:\Nat^{p+1}\to \Nat$ defined as the solution of the IVP with initial value $f(0,\tu y)=g(\tu y)$ and such that:
   \begin{itemize}
   \itemsep0em
       \item $\frac{\partial f(x,\tu y)}{\partial x}=f(x,\tu y)\times k_1(x,\tu y)$ is said to be defined by ODE$_\wedge$
       \item $\frac{\partial f(x,\tu y)}{\partial x}=-f(x,\tu y)\times k_2(x,\tu y)+k_2(x,\tu y)$ is said to be defined by ODE$_\vee$
       \item $\frac{\partial f(x,\tu y)}{\partial x}=k_2(x,\tu y)$ is said to be defined by ODE$_{\#}$.
   \end{itemize}
\end{defn}

It is easy to check that solutions of the corresponding systems precisely correspond to the desired operations: universal bounded search, existential bounded search, and counting.

\paragraph{Recognizing accepting computation paths.}
Typical machine-independent characterizations of non-deterministic classes with polynomial resource restrictions ($\cc{NP}$, $\sP$, \dots) are defined having as the base class $\cc{P}$ or $\cc{FP}$.
Thus, having at our disposal ODE schemas which mimic the operations required to define these classes, a natural approach to characterize them would be to close the algebra of functions $\mathbb{LDL}$~\cite{BournezDurand} under these schemas.
Here we show that, in fact, we can use a much weaker class as our basis, namely~$\ACDL$~\cite{MFCS24}.
Indeed, while characterizations in~\cite{oitavem:22,dal:22} rely on schemas to represent the non-deterministic bits generated by a non-deterministic Turing machine (NTM) $M$, we instead use them to generate the entire computation of $M$.

Then, what we need is simply a mechanism allowing us to check whether the generated words correspond to accepting computations of $M$, which can be done already in $\AC^0$ (similar formulations of this result have been proved in~\cite[Th.~3.1]{vollmer:96} and~\cite[Th.~7.8]{immerman:12}).
The core intuition behind this result is that, under a suitable encoding, one can interpret a computation path of a machine $M$ as a binary word. For each machine $M$, we can define the set
\[
\mathtt{acc}_{M} = \{(z,x) \in \mathbb{N}^2 : z \text{ is an accepting computation of } M \text{ on input } x\}.
\]

\begin{theorem}\label{th:enc}
For any (N)TM $M$ and input $x$, there is an $\cc{FAC}^0$ function deciding $\mathtt{acc}_{M}$. 
\end{theorem}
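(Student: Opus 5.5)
The plan is to follow the standard tableau argument, in the style of \cite[Th.~3.1]{vollmer:96} and \cite[Th.~7.8]{immerman:12}. First we fix an encoding. Assume $M$ is a (nondeterministic) machine with a single tape, state set $Q$, tape alphabet $\Gamma$, and transition relation $\delta$. A configuration is a word over $\Gamma\cup(Q\times\Gamma)$, where exactly one cell carries the state and head position. Each such symbol is coded by a fixed-length binary block of $c=\lceil\log_2(|\Gamma|(|Q|+1))\rceil+1$ bits. A computation $z$ is a sequence of configurations $C_0\#C_1\#\cdots\#C_t$, all padded to a common length $s$, with a fixed-length separator (again a block). The whole sequence is written in binary, behind a leading $1$ so that $z\in\mathbb{N}$ determines the word uniquely.

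The key point is that $s$ and $t$ can be recovered from $z$ by simple arithmetic on positions in $\AC^0$. For example, $s$ is the position of the first separator block, divided by $c$. Bit extraction $\fun{bit}(z,i)$, together with addition and multiplication by constants on indices of length $O(\log|z|)$, is available in $\AC^0$. The statement is meant non-uniformly in $x$: $M$ is fixed and the function takes $(z,x)$ as input, so "for any input $x$" just means the function is uniform in $x$.

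The check $(z,x)\in\acc_M$ then splits into a conjunction of local conditions, each expressible as a bounded first-order formula over bit predicates, and therefore in $\AC^0$.
\begin{enumerate}
\item \emph{Well-formedness.} Every block codes a legal symbol. The separators occur exactly at positions $k(s+1)c$. Each configuration segment contains exactly one state-marked cell. Each of these is a $\forall$ over positions of a constant-size test; "exactly one" is $\exists i\,\forall j\neq i$.
\item \emph{Initial configuration.} $C_0$ codes the initial state, the head on cell $0$, and then the bits of $x$ followed by blanks. This requires the $j$-th cell of $C_0$ to match $\fun{bit}(x,j)$ for $j<|x|$ and a blank otherwise. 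It is a $\forall j$ statement, using the $\AC^0$ length function.
\item \emph{Local transitions.} For every $k<t$ and every cell $j$, the block $(k+1,j)$ is determined, up to the allowed nondeterminism, by the window of blocks $(k,j-1),(k,j),(k,j+1)$. Since $\delta$ is finite, the set of legal $2\times 3$ windows is a finite table, so each window is checked by a constant-size circuit. We must also require that exactly one transition is chosen per step; this is automatic because only one head exists and windows are checked consistently around it.
\item \emph{Acceptance.} Some configuration (or the last one) contains an accepting state: $\exists j$.
\end{enumerate}
The final output $1_{\acc_M}(z,x)$ is the AND of these conditions, giving a polynomial-size, constant-depth circuit family in $|z|+|x|$. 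It is DLOGTIME-uniform because all index arithmetic is on $O(\log n)$-bit numbers.

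The main obstacle is item~3, together with how the window check interacts with the encoding. The difficulty is ensuring that cell $(k,j)$ sits at bit position $(k(s+1)+j)c$, computable from $k$ and $j$ in $\AC^0$. Multiplication of two $O(\log n)$-bit numbers is fine in $\AC^0$; alternatively, one can quantify directly over positions $p$ and compare $p$ with $p+(s+1)c$, which needs only addition. I would choose the latter, so that the transition condition reads $\forall p\,$[$p$ is a block start $\Rightarrow$ window$(p-c,p,p+c)$ is compatible with the block at $p+(s+1)c$]. This avoids multiplication entirely.

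Finally, to match the paper's setting, I would note that the resulting predicate can be expressed as a function in $\ACDL$ via the characterization of $\AC^0$ from~\cite{MFCS24}, but for the theorem itself membership in $\AC^0$ suffices.
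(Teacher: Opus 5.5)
Your overall plan matches the paper's: encode the computation as a tableau and take a conjunction of constant-size local tests, which gives constant depth. However, the step you call automatic in item~3 fails for nondeterministic $M$. Your phrase ``legal $2\times 3$ windows'' points at the right object, but the test you actually write, both in item~3 and as ``window$(p-c,p,p+c)$ compatible with the block at $p+(s+1)c$'', has three cells above and only one below. Each such window need only be consistent with \emph{some} transition, so neighbouring windows around the head can be justified by different transitions.

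Concretely, let $\delta$ contain $(q,a)\to(q_1,b_1,L)$ and $(q,a)\to(q_2,b_2,R)$ with $b_1\neq b_2$. Let row $k$ read $d\,(q,a)\,e$ at cells $j-1,j,j+1$, and let row $k+1$ read $d\;b_1\;(q_2,e)$ there and agree with row $k$ elsewhere. Every window passes:
\begin{itemize}
\item the window at $j-1$ is legal via the right move;
\item the window at $j$ is legal via the left move, since the head leaves cell $j$, which now holds $b_1$;
\item the window at $j+1$ is legal via the right move.
\end{itemize}
Row $k+1$ also has exactly one state-marked cell. Yet the only successors are $(q_1,d)\,b_1\,e$ and $d\,b_2\,(q_2,e)$. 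So your circuit accepts strings that are not computations of $M$, and for suitable $M$ it accepts ``witnesses'' for inputs that $M$ rejects. ``Only one head exists'' rules out only those spliced steps that create zero or two heads.

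The repair is standard and stays in $\AC^0$. Test full $2\times 3$ windows: compare the blocks at $p-c,p,p+c$ with the three blocks at $p-c+(s+1)c,\ p+(s+1)c,\ p+c+(s+1)c$, and require each window to match a \emph{single} transition. The window centred on the head cell then contains every cell the step can change, so one transition must account for the new state, the new head position and the written symbol together. Alternatively, record the chosen transition in the head cell's block. The full window is the condition in the paper's definition of an accepting computation, $(c_i^{j-1},c_i^j,c_i^{j+1},s_i)\vdash_M(c_{i+1}^{j-1},c_{i+1}^j,c_{i+1}^{j+1},s_{i+1})$. The paper's later displayed $\delta_M$ keeps only one lower cell and should be read with the full window.

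Beyond this, your route differs from the paper's. You parse $s,t$ from $z$ via separators, which covers arbitrary $M$ and makes you address uniformity. The paper instead fixes an $(r(n)+1)\times(r(n)+2)$ layout, so all positions depend only on $|x|$ and no index arithmetic is needed. That rigid layout also gives each accepting path exactly one code, of a length determined by $|x|$. Th.~\ref{th:NP} uses this fixed length, and the counting characterizations ($\sP$, $\oP$) depend on the one-code-per-path property. With your free padding $s$, a path has many codes: harmless for deciding $\acc_M$, but not usable for counting paths.
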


This approach partially strengthens the state of the art~\cite{vollmer:96,oitavem:11,oitavem:22,dal:22} and establishes a clear(er) bridge between the ODE-based approach and descriptive complexity, where it is also {\sf FO} ($=\cc{AC}^0$) which serves as the basis for the characterization of complexity classes~\cite{fagin:74,immerman:89,immerman:12}.

%----------------------------------------------------%
\section{Capturing Classes Beyond $\FP$ via Discrete ODEs: an Overview}\label{sec:overview}
%----------------------------------------------------%
Simply combining $\ACDL$ and restricted composition with ODE$_\wedge$ or ODE$_\vee$ is enough to capture the base levels of $\FPH$. For example, 

\begin{theorem}\label{th:NP}
    $\AC^0 \cup \cc{NP} = [\ACDL; \circ_0, \emph{ODE}_\vee]$.
\end{theorem}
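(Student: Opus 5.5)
We prove the two inclusions separately. Throughout, we identify $\cc{NP}$ with the class of characteristic functions of $\cc{NP}$ sets, viewed as $\{0,1\}$-valued functions on $\Nat^p$.

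\emph{Soundness} ($\supseteq$). We proceed by induction on the derivation of $f\in[\ACDL;\circ_0,\text{ODE}_\vee]$ and show that $f\in\AC^0\cup\cc{NP}$.
\begin{itemize}
\item \emph{Base case.} Functions of $\ACDL$ lie in $\AC^0$ by~\cite{MFCS24}.
\item \emph{ODE$_\vee$.} Suppose $f$ is obtained by ODE$_\vee$ from $g$ and $k_2$. Unfolding the equation gives
\[
f(x,\tu y)=1 \iff g(\tu y)=1 \ \vee\ \exists t<x\ k_2(t,\tu y)=1 .
\]
By induction, $g$ and $k_2$ are $\{0,1\}$-valued and belong to $\AC^0$ or to $\cc{NP}$. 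In either case $f$ is an existential quantification over a bounded index. The subtle point is that $x$ is a number, so the search over $t<x$ ranges over exponentially many values, i.e.\ over witnesses of polynomial length. This does not fit in $\AC^0$, but it does fit in $\cc{NP}$: we guess $t$, and if $k_2$ is itself in $\cc{NP}$ we also guess its witness, merging the existential quantifiers. Hence $f\in\cc{NP}$.
\item \emph{Restricted composition $\circ_0$.} Suppose $f=h(\tu g)$ with all $\tu g\in\ACDL$ and $h$ in $\AC^0\cup\cc{NP}$ by induction. If $h\in\AC^0$ then $f\in\AC^0$. If $h\in\cc{NP}$, then $f$ is a polytime (indeed $\AC^0$) many-one reduction to an $\cc{NP}$ set, so $f\in\cc{NP}$.
\end{itemize}
This is exactly why composition is restricted: no $\cc{NP}$ function ever appears in an inner position, which would allow negation of an $\cc{NP}$ predicate.

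\emph{Completeness} ($\subseteq$). We have $\AC^0\subseteq\ACDL$ by~\cite{MFCS24}. Now let $A\in\cc{NP}$ be decided by an NTM $M$ running in time $n^c$. By Theorem~\ref{th:enc} there is an $\AC^0$ (hence $\ACDL$) function $\acc_M(z,x)$ deciding whether $z$ encodes an accepting computation of $M$ on $x$. Since the running time is polynomial, every accepting computation can be encoded by some $z< B(x):=2^{p(\ell(x))}$ for a suitable polynomial $p$. We then define $f$ by ODE$_\vee$ with initial value $g=0$ and $k_2(t,x)=\acc_M(t,x)$. This gives
\[
f(t,x)=1 \iff \exists z<t\ \acc_M(z,x).
\]
Finally we set $1_A(x)=f(B(x),x)$. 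This composition is legal under $\circ_0$ provided the inner functions $B$ and $\pi$ lie in $\ACDL$. The function $B(x)$ is essentially $x^{O(1)}$-sized shifting, e.g.\ $2^{\ell(x)^c}$, which is an iterated left-shift and is available in $\ACDL$ by~\cite{MFCS24}. Alternatively, one can choose the encoding so that the witness bound is a simple polynomial-length term such as $\fun{sg}$-combinations of products of $x$.

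\emph{Main obstacle.} The hardest step is checking the encoding details in the completeness direction. First, the bound $B(x)$ must be expressible in $\ACDL$ with only base-level inner composition. Second, the encoding of Theorem~\ref{th:enc} must be robust to $z$ ranging over all numbers below $B(x)$, including malformed ones, which $\acc_M$ must simply reject. Third, for functions in $\cc{NP}$ whose output is not $\{0,1\}$-valued, one must fix that the statement concerns characteristic functions. If $2^{\ell(x)^c}$ is not directly available, the fallback is to pad: encode computations so that any $z<x^{\#}$, with $x^{\#}$ a smash-like term from $\ACDL$, suffices.
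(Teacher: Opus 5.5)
Your proposal is correct and follows the paper's proof essentially step by step. Soundness goes by induction, using that $\cc{NP}$ is closed under polynomially bounded existential quantification and under $\AC^0$ many-one reductions (which is all $\circ_0$ allows). Completeness applies ODE$_\vee$ with initial value $0$ to the $\ACDL$ checker for $\acc_M$ from Theorem~\ref{th:enc}, evaluated at a bound $2^{p(\ell(x))}$; the paper writes this bound as $1\# r(x)$ with $\ell(r(x))=p(\ell(x))$. The only thing to discard is your aside about a bound built from products of $x$: such terms have length $O(\ell(x))$, which is too short for computation encodings of length $p(\ell(x))$. The smash-based bound you also mention is exactly the one the paper uses.
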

\begin{proof}[Proof idea]
Since $\ACDL=\AC^0$~\cite{MFCS24}, we focus on restricted composition and ODE$_\vee$.
$(\supseteq)$ The closure property of $\circ_0$ follows directly from the definition of restricted composition (see also~\cite{oitavem:11}).
When $f$ is defined by ODE$_\vee$ from $g$ and $k$, by Def.~\ref{def:parODE}, $f(x,\tu y)=1$ iff $g(\tu y)=1$ or there is a $u\in \{0,\dots, x-1\}$ s.t.~$k(u,\tu y)=1$.
By the induction hypothesis, $g$ and $k$ are in $\NP$, so that computing $f(x,\tu y)$ amounts to computing a bounded existential quantifier over $\NP$ functions.
Since $\NP$ is closed under bounded existential quantifiers, $f$ can be computed in $\NP$.
$(\subseteq)$ Let $M$ be a poly-time $\NP$-machine $M$ and let $V(z,x)\in \ACDL$ be the function deciding the set $\acc_M$.
Since $M$ runs in poly-time, there is a polynomial $p:\mathbb{N} \to \mathbb{N}$ s.t.~for any $x$, the encoding of an accepting computation of $M(x)$ has \emph{precisely} length $p(\ell(x))$. Let $r:\mathbb{N} \to \mathbb{N}$ be s.t.~$\ell(r(x))=p(\ell(x))$.
The details of the encoding can be defined in various ways (see~\cite{wagner86}).
We define $f$ by ODE$_\vee$ from 0 and $V$, being $V(z,x)=0$ for any word $z$ of size smaller than $r(x)$, $f(1 \# r(x), x)=1$ iff there is a word $z$ of size $\ell(r(x))$ (a possible computation path of $M$) s.t.~$V(z,x)=1$, i.e.,~for which $z$ represents an accepting computation of $M$ on $x$.
\end{proof}

\noindent
The class $\AC^0\, \cup\, \cc{coNP}$ is captured in an analogous way, by considering ODE$_\wedge$, and subsequent levels of $\cc{PH}$ are characterized by iteratively considering the previous level as the basic function class.

A similar approach extends to the study of the counting classes~\cite{Valiant} and hierarchy~\cite{wagner86} ($\cc{FCH}$) and its levels, where the crucial schema is instead ODE$_{\#}$, i.e.,~a schema that allows one to ``count'' paths. 
Intuitively, completeness direction shows that if $f \in \sP$, i.e.~if there is a poly-time NTM $M$ s.t., for every $x$, $f(x) = \#\acc_{M,x}$, we can use ODE$_{\#}$ to define $F(r(x),x)=f(x)$ using $g(z)=0$ and $V(z,x)\in\ACDL$ which returns $1$ iff $z$ is an accepting computation of $M$ on input $x$.
Notably, the flexibility of this approach allows us to straightforwardly capture other counting classes never characterized before, such as $\oP$ or $\ModkP$, by considering, again, $\ACDL$, $\circ_0$ but different-counting ODE schemas (e.g., a schema \emph{counting modulo 2}).

Other simple restrictions on linearity lead to ODEs with significant computational power. Although this investigation is still in progress, at least two of them seem especially worth mentioning. 
Endowing $\ACDL$ with an ODE schema that differentiates with respect to $x$ and such that $A=-1$ (or is restricted to $\{0,1\}$) is enough to capture $\cc{FPSPACE}$.
Observe that this schema closely mirrors the one defining $\NC^1$, which instead differentiates with respect to $\ell$; a connection which deserves further investigation.
Moving to derivations along $\ell$, an alternative characterization for $\FP$ can be obtained by considering a schema featuring a generalized form of linearity, where $A=1$ and $B$ takes values in $\{0,1\}$ rather than $\ell$-ODE~\cite{BournezDurand}.
The relationship between these two characterizations is currently being explored.

%----------------------------------------------------%
\section{Perspectives}
%----------------------------------------------------%

The paper presents a high-level overview of an ongoing project that aims to advance the study of complexity through the perspective of ODEs and clarify the power of these schemas, for instance, to capture alternating hierarchies. 
Developing clear ODE instruments and systematically investigating these characterizations is the precise focus of the research sketched here.

This investigation is relevant for multiple reasons. 
First, it further exemplifies how flexible ODEs can be: besides circuits and dynamical systems, ODEs are able to capture resources associated with non-determinism, relativized computation and counting machines. 
Additionally, while most of the classes mentioned here have already been characterized using standard tools available in recursion theory, the ODE perspective offers for the first time a \emph{unique} mathematical object capable of characterizing diverse aspects of computation, spanning both the discrete and continuous domains. 
To our knowledge, such universality is not offered by any previously developed tools in recursion theory.
The versatility of this framework is made possible by the variety of features defining ODEs (e.g., strict vs.~non-strict, functions to derive along, linearity, etc.). 
Clarifying the relationship between these constraints remains an open question for future research, potentially bridging other approaches to implicit complexity,~e.g.,~by comparing how computational resources are captured by constraints on ODEs vs. logical features~\cite{fagin:74,immerman:89,immerman:12}.

Finally, far from being just another way to rewrite recursion schemas, this framework is capable of straightforwardly capturing classes never characterized before in recursion theory, as shown in~\cite{ICALP} and for~$\oplus\mathbf{P}$. We are thus confident that these tools can also provide original characterizations for other relevant classes (e.g.,~involving randomness~\cite{Gill77} or monotonicity~\cite{grigni:92}) and offer new means to study the relationships between them.

\vspace{0.5cm}

\begin{table}[ht]
\centering
\small
\begin{tabular}{|c | c |c |c || c| c | c | c |}
 \hline
 &\multicolumn{3}{c||}{Deriving along $x$} & \multicolumn{3}{c|}{Deriving along $\ell$} \\
 \cline{2-7} & $\AC^0 \cup \mathbf{PH}$ & $\FCH$ & $\cc{FPSPACE}$ 
  & $\TC^0$ & $\cc{FNC}^1$ & $\FP$ \\ 
 \hline\hline
 lev.~1 & $\circ_0$ ; $-f \times k_2 + k_2$  & $\circ_0$ ; $k_2$ & {} & {} & {} & $f\times A + B$ \\ 
 \hline
 lev.~$n$ & $\circ_n$ ; $-f  \times k_2 + k_2$ & $\circ_n$ ; $k_2$ & $-f + B$ & $k_2$ & $-f + B$ & {} \\
 \hline
 union & $-f  \times k_2 + k_2$ & $k_2$ & {} & {} & {} & $f+K$ \\
 \hline
\end{tabular}
\caption{\footnotesize Summary of the defining features of ODE schemas characterizing different classes and hierarchy levels. Characterizations are obtained by closing $\ACDL$ or the preceding hierarchy level (denoted by $\circ_n$) under (possibly restricted, $\circ_0$) composition and the given ODE schema. Whenever composition is not restricted, $\circ$ is omitted. We follow the convention that $k_1:\Nat \to \{-1,0\}$, $k_2:\Nat \to \{0,1\}$, and $B$ (resp., $K$) is a (resp., generalized) $\fun{sg}$-polynomial expression, s.t.~$f$ only occurs under the scope of $\fun{sg}$ (resp.,~takes values in $\{0,1\}$).}
\end{table}

% --- Bibliography ---
% Un-comment and add your bib file if submitting to arXiv with BibTeX
 \bibliographystyle{plain}
 \bibliography{bib.bib}

\newpage
%----------------------------------------------------%
%----------------------------------------------------%
\appendix
%----------------------------------------------------%
%----------------------------------------------------%
\section{Checking accepting computations}\label{app:accept}
%----------------------------------------------------%
%----------------------------------------------------%
We prove that checking if a word represent an accepting computation can be done in $\AC^0$.
This result has been proven e.g. in~\cite[Th.~3.1]{vollmer:96} and~\cite[Th.~7.8]{immerman:12}), but we give here a simple a self-contained proof. For this, we must first fix an encoding of Turing machines (TMs).
Our machine model is a standard (one-tape) TM over the tape alphabet $\Sigma= \{\triangleright, \square, 1, 0\}$, where $\triangleright$ denotes ``beginning of the tape marker'' and $\square$ is for ``blank'' (for details, see e.g.~\cite{arora:09}).

\begin{defn}
    A Turing machine is a tuple $M=(Q, \delta, q_0,q_a, q_r)$ where:
    \begin{itemize}
        \itemsep0em
        \item $Q$ is a finite set of non-final states 
        \item $q_0\in Q$ is the starting state;
        \item $q_a \not\in Q$ is the \emph{only} accepting state;
        \item $q_r \not\in Q$ is the \emph{only} rejecting state (the last two being the halting states);
        \item $\delta \subseteq (Q\times \Sigma) \times \big((Q\cup \{q_a,q_r\}) \times (\Sigma \times \{L,S,R\})\big)$ is the transition relation;
        \item $L,S$ and $R$ represent the head-moves ``left'', ``stay'', and ``right''.
        %; whenever $\delta(q,\sigma_1, \dots, \sigma_k)=(q', \tau_1, d_1, \dots, \tau_k, d_k)$ if $\delta_j=\triangle$, then $\tau_j=\triangle$ and $d_j=R$.
    \end{itemize}
\end{defn}
\noindent
When $\delta$ is a function and not just a relation, $M$ is a deterministic TM (DTM); otherwise it is a non-deterministic TM (NTM).

%When $M$ is an NTM, multiple accepting computations can be associated to each word $x$

\begin{theorem}
For any TM $M$, the set $\acc_{M}$, containing the pairs $(z,x)$ such that $z$ encodes an accepting computations of $M$ on input $x$, is in $\AC^0$.
\end{theorem}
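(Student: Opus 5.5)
The plan is to fix a rigid encoding of computations as tableaux, so that checking acceptance reduces to a conjunction of polynomially many local conditions, each of constant size. Let $M$ run within time $T$ on inputs of length $n=\ell(x)$; in the relevant applications $T=p(n)$ for a polynomial $p$. A computation is encoded as a sequence of $T+1$ configurations $C_0,\dots,C_T$. Each configuration is a string of exactly $T+1$ cells, and each cell holds a symbol from the finite alphabet $\Gamma=\Sigma\cup(Q\cup\{q_a,q_r\})\times\Sigma$, where a pair $(q,\sigma)$ marks the head position and the current state. Each symbol of $\Gamma$ is written with a fixed number $c=\lceil\log_2|\Gamma|\rceil$ of bits. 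Then $z$ has length exactly $c(T+1)^2$, and the bit positions belonging to cell $j$ of row $i$ are given by an explicit arithmetic expression in $i$ and $j$. Halting configurations are padded by repeating them until row $T$. This way every accepting computation has the same length, which is also what the proof of Theorem~\ref{th:NP} relies on.

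The $\AC^0$ circuit for inputs $(z,x)$ is the conjunction of the following checks.
\begin{itemize}
\item[(i)] \emph{Length.} $|z|=c(T+1)^2$, where $T$ depends only on $n$. Since $\AC^0$ is non-uniform in the size parameter, or alternatively since a fixed-length test is FO-definable, this is a single hardwired test.
\item[(ii)] \emph{Initial row.} Row $0$ equals $(q_0,\triangleright)$, followed by the bits of $x$ and then blanks. This is a conjunction of $O(T)$ bitwise equalities between positions of $z$ and positions of $x$.
\item[(iii)] \emph{Local consistency.} For every $1\le i\le T$ and every cell $j$, the window of cells $j-1,j,j+1$ in row $i-1$ together with cell $j$ in row $i$ forms a tuple in a finite relation $W_\delta\subseteq\Gamma^4$ determined by $\delta$. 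Each such test is a constant-size Boolean function of $4c$ bits, hence a constant-size DNF, so all $O(T^2)$ tests together give polynomial size and constant depth.
\item[(iv)] \emph{Uniqueness.} Each row contains exactly one head symbol. This is an OR over positions combined with a pairwise AND of negations, which is $O(T^2)$ gates per row and constant depth.
\item[(v)] \emph{Acceptance.} Some cell of row $T$ contains $q_a$, which is a single OR.
\end{itemize}
For a DTM nothing changes, since $W_\delta$ is simply a smaller relation. For an NTM, $z$ fixes the nondeterministic choices implicitly, because the window relation admits any transition in $\delta$.

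The main obstacle is step (iii) in the nondeterministic case. Purely local windows could in principle combine two different transitions, for example one choice writing the symbol while another moves the head. The remedy I would try first is to attach to each head cell of row $i$ the transition it takes, letting the symbol be $(q,\sigma,t)$ with $t\in\delta$. The window check then refers to this one recorded transition, which keeps the alphabet finite and $c$ constant. Two further routine points need care: head moves at the boundaries (the marker $\triangleright$ forces a move $R$) and rows with no head after a halt, both of which the padding convention handles. Finally, to obtain the function form of Theorem~\ref{th:enc} (a $\cc{FAC}^0$ function deciding $\acc_M$), one notes that the characteristic function of an $\AC^0$ set is in $\cc{FAC}^0$, and hence in $\ACDL$ by~\cite{MFCS24}.
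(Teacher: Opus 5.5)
Your proposal is correct and follows essentially the same route as the paper: encode the computation as a $(T+1)\times(T+1)$ tableau with constant-size cell codes, and express membership in $\acc_M$ as a polynomial-size conjunction of constant-size checks (initial row, acceptance, local transition windows), which yields a constant-depth circuit. The encodings differ only cosmetically: you store the state in the head cell, while the paper uses a separate word $s_i$ per row. One point is more than cosmetic, though: by recording the chosen transition in the head cell and padding after a halt, you make explicit something the paper leaves unaddressed. Its check $\delta_M(z_i^{j-1}, z_i^j, z_i^{j+1}, z_{i+1}^j, z_{i+1}^{r+1})$ constrains one cell of row $i+1$ at a time, so on its own it does not stop neighbouring cells from following different nondeterministic transitions.
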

\begin{proof}

%\begin{definition}[Accepting computation]
    \begin{sloppypar}
    Let $M=(Q,\delta, q_0, q_a, q_r)$, where $|Q|=m$ is the set of states, run in time bounded by a polynomial $r$.
    Let $x\in \{0,1\}^n$ and $r=r(n)$.
    A possible computation of $M(x)$ can be described by a sequence of $(r+2)(r+1)$ words
    $$
    \underbrace{c_0^0 c_0^1 \dots c_0^r s_0}_{\emph{initial config.}} 
    %content of the tape and initial state}}
    c_1^0 c_1^1 \dots c_1^r s_1 \dots 
    \underbrace{c_r^0 c_r^1 \dots c_r^r s_r}_{\emph{final config.}}
    $$
    where each $c_i^j \in \{0,1\}$ contains the symbol in square $j\in \{1,\dots, r-1\}$ at time step $i\in\{0,\dots, r\}$ and a bit indicating whether the tape head is in this cell.
    Each word $s_i$ represents the respective state at time  $i$.
    Such sequence is an \emph{accepting computation} if: 
    \begin{itemize}
        \itemsep0em
        \item for all $i\in\{0,\dots, r\}$, $(c_i^0)_0 = \triangleright$ and $(c^r_i)_0 = \square$, as the machine alters the content of the cells in positions between 1 and $r-1$ only;
        \item $s_0=q_0$, $c^0_0=(\triangleright, 1)$, for $j\in \{1,\dots, n\}$, $c_1^j = (x_{j-1}, 0)$, for all $j\in \{n, \dots, r\}$, $c_0^j=(\square, 0)$ (initial configuration);
        \item $s_r=q_a$ (accepting configuration);
        \item for any time step $i$ and tape position $j$, with $i,j\in \{0,\dots, r-1\}$, $(c_i^{j-1}, c_i^j, c_i^{j+1}, s_i) \vdash_M (c^{j-1}_{i+1}, c_{i+1}^j, c_{i+1}^{j+1}, s_{i+1})$, i.e.~the sequence respects the transition relation of $M$.\footnote{In fact, having $c_i^0=\triangleright$ and $c_i^r=\square$, for all $i\in \{0,\dots, r\}$, prevents one from having to distinguish the edges of the tape.}
    \end{itemize}
    The relation $\vdash_M$ is the predicate which, given $s_i$ (i.e.~the state at time $i$) and $c_i^{j-1}, c_i^j, c_i^{j+1}$ (i.e.~configurations of cells from $j-1$ to $j+1$, which encode the symbols of the tape cells and if the tape head is there) asserts whether the word $c_{i+1}^j$ correctly represents the new symbol and head position of $M$ and $s_{i+1}$ corresponds to the new state at step $i+1$.
    If $M$ is deterministic, only one pair $c_{i+1}^j$ and $s_{i+1}$ satisfies the predicate.
%\end{definition}
\end{sloppypar}

Each state $q$ of $M$ can be encoded as a distinct binary word $\lceil q\rceil$ of size $\lceil$log$_2(m+1)\rceil$ and the symbols of $\Sigma=\{0, 1, \triangleright,\square\}$ as 00, 01, 10, and 11, resp.
This means that each $c_i^j$ can be represented by a binary word of size 3 (the extra bit signaling the tape head) and the sequence above has size $k(n)=\lceil$log$_2(m+1)\rceil\times (r(n)+1)+3 \times (r(n)+1)^2$.
Thus, the relation $\vdash_M$ can be represented as a predicate $\delta_M$ on binary words.

\begin{sloppypar}
    %Let $M$ be an $\NP$-machine with running time bounded by $r$. Let $x\in\{0,1\}^n$ be the input to $M$.
    %
    %Recall that 0, 1, $\triangleright$ and $\square$ are encoded as 00, 01, 10, and 11, resp. 
    %
    %Let $k(n)=\lceil$log$_2(|Q|+1)\rceil \times (r(n)+1)+3 \times (r(n)+1)^2$.%, which is a polynomial in $n$.
    %
    For $0\leq i,j\leq r$, let $z^j_i$ denote the binary word encoding $c_i^j$ as described above and let $z^{r+1}_i$ encode $s_i$.
    Then, $z$ is an accepting computation if the following expression holds:
    % TODO MEL : Maybe, this sentence is not super nice
    \begin{itemize}
        \itemsep0em
        \item conditions on the first and last symbols are expressed as: $\bigwedge^r_{i=0} z_i^0(2) z_i^0(1) = 10 \wedge \bigwedge^r_{i=0} z_i^r(2) z_i^r(1)=11$;
        \item conditions on initial and final configurations can be defined as: $z_0^0 = 101 \wedge \bigwedge^n_{j=1} z_0^j = 0x_j0 \wedge \bigwedge^r_{j=n+1} z_0^j=110 \wedge z_0^{r+1} = \lceil q_0\rceil \wedge z_r^{r+1} = \lceil q_a\rceil$;
        \item conditions on the transition functions are given by: $\bigwedge^{r-1}_{i=0} \bigwedge^{r-1}_{j=1} \delta_M(z_i^{j-1}, z_i^j, z_i^{j+1}, z_{i+1}^j, z_{i+1}^{r+1})$.
    \end{itemize}
    The conjunction of those can be checked by a circuit family of constant depth, so $\acc_M\in\AC^0$.
\end{sloppypar}
\end{proof}

%----------------------------------------------------%
%----------------------------------------------------%
%----------------------------------------------------%
%----------------------------------------------------%
\section{The computational content of ODE schemas}
%----------------------------------------------------%
%----------------------------------------------------%
\begin{proposition}\label{prop:ODEs}
    Let $k:\Nat^{p+1}\to \{0,1\}$. If $f(x,\tu y)$ is defined by:
    \begin{enumerate}
    \itemsep0em
        \item ODE$_\wedge$ from $g(\tu y):= 1$ and $k_1(x,\tu y):=k(x,\tu y)-1$ (with $k$ taking values in $\{0,1\}$), then $f(x,\tu y)=1$ iff for every $i\leq x-1$, $k(i,\tu y)=1$;
        \item ODE$_\vee$ from $g(x,\tu y):=0$ and $k_2(x,\tu y)=k(x,\tu y)$, then $f(x,\tu y)=1$ iff there is $i\leq x-1$ s.t.~$k_2(i,\tu y)=1$;
        \item ODE$_{\#}$ from $g(x,\tu y):=0$ and $k_2(x,\tu y)=k(x,\tu y)$, then $f(x,\tu y)=\sum^{x-1}_{u=0}k_2(u,\tu y)$.
    \end{enumerate}
\end{proposition}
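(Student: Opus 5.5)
The plan is to unfold each schema into its explicit recurrence, using the convention that $\frac{\partial f(x,\tu y)}{\partial x}$ stands for $f(x+1,\tu y)-f(x,\tu y)$. Each item is then proved by induction on $x$, for fixed $\tu y$. The invariant in every case is that $f(x,\tu y)$ equals the claimed quantity over the range $\{0,\dots,x-1\}$. The base case $x=0$ uses the initial value $g$ together with the facts that the empty conjunction is $1$, the empty disjunction is $0$, and the empty sum is $0$.

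For item 1, the recurrence is
\[
f(x+1,\tu y)=f(x,\tu y)\bigl(1+k_1(x,\tu y)\bigr)=f(x,\tu y)\,k(x,\tu y).
\]
Starting from $f(0,\tu y)=1$, induction gives $f(x,\tu y)=\prod_{i=0}^{x-1}k(i,\tu y)$. This product lies in $\{0,1\}$ and equals $1$ exactly when every factor is $1$.

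For item 2, the recurrence is
\[
f(x+1,\tu y)=f(x,\tu y)\bigl(1-k(x,\tu y)\bigr)+k(x,\tu y).
\]
We check the two cases of $k(x,\tu y)$. If $k(x,\tu y)=1$, then $f(x+1,\tu y)=1$. If $k(x,\tu y)=0$, then $f(x+1,\tu y)=f(x,\tu y)$. The invariant to carry is that $f(x,\tu y)\in\{0,1\}$ and $f(x,\tu y)=1$ iff some $i\le x-1$ has $k(i,\tu y)=1$. The case analysis preserves both parts: the value stays in $\{0,1\}$, and it becomes or remains $1$ exactly when a witness appears. Equivalently, one can note $f(x+1,\tu y)=\max(f(x,\tu y),k(x,\tu y))$ on $\{0,1\}$.

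For item 3, the recurrence $f(x+1,\tu y)=f(x,\tu y)+k(x,\tu y)$ telescopes from $f(0,\tu y)=0$ to $\sum_{u=0}^{x-1}k(u,\tu y)$. Alternatively, one can read all three results off the closed-form solution of linear systems recalled after Def.~\ref{def:ODE}, substituting the respective $A$ and $B$. In item 3 the product factors are $1$, so only the sum survives. In item 1, $B=0$ and only the term with $u=-1$ survives.

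None of these steps is really an obstacle. The only point needing care is item 2, where the invariant must explicitly include $f\in\{0,1\}$. Without it, the equation $f(1-k)+k$ would not reduce to Boolean disjunction. I would handle this by strengthening the induction hypothesis as described above.
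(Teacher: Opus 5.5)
Your proof is correct and essentially matches the paper's. The paper reads each item off the closed-form solution of linear systems recalled after Def.~\ref{def:ODE}, which you mention as an alternative, whereas you run the induction on the unfolded recurrence that underlies that formula. Your unfolding is slightly more careful in item 1: you keep the initial value as a separate factor $f(0,\tu y)\prod_{t=0}^{x-1}(1+k_1(t,\tu y))$, rather than the paper's $\prod_{i=-1}^{x-1}(1+k_1(i,\tu y))$ with $k_1(-1,\tu y)=g(\tu y)$, which gives a factor $2$ when $g(\tu y)=1$. In item 2 your extra invariant $f\in\{0,1\}$ is harmless but not needed, since $k(x,\tu y)=1$ forces $f(x+1,\tu y)=1$ whatever $f(x,\tu y)$ is.
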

\begin{proof}
The proof follows by Def.~\ref{def:ODE} and solutions of linear systems.
Regarding item $1$, it is easy to see that a function defined by ODE$_\wedge$ from $g(\tu y)=1$ and $k_1(x,\tu y)=k(x,\tu y)-1$ performs the following computation 
$$
f(x,\tu y)=\prod^{x-1}_{i=-1}(1+k_1(i, \tu y))
$$
with $k_1(-1,\tu y)=g(\tu y)$.
Clearly, this product is $1$ if ($g(\tu y)=1$ and) for all $t\in \{0,\dots, x-1\}$, $k_1(t, \tu y)=0$, that is for any $t$, $k(t,\tu y)=1$.
%It is then always possible to define a function $k_F(x,\tu y)$ taking values in $\{0,1\}$ to be such that~$k_F(x,\tu y)=1$ iff $k_1(x,\tu y)=0$ so that $f(x,\tu y)=1$ iff $g(\tu y)=1$ and for any $t\in \{0,\dots, x-1\}$, $k_F(t,\tu y)=1$.
\\
Similarly, regarding item $2$, a function defined by ODE$_\vee$ from $g$ and $k_2$ computes 
\begin{align*}
f(x,\tu y) &=\sum^{x-1}_{u=-1} \prod^{x-1}_{t=u+1}(1-k_2(t,\tu y)) \times k_2(u,\tu y)
\end{align*}
where  $k_2(-1, \tu y)=g(\tu y)$.
Since $k_2$ takes only values in $\{0,1\}$, $f(x,\tu y)= 1$ iff $g(\tu y)=1$ or there is at least a $t\in \{0,\dots, x-1\}$ such that~$k_2(t,\tu y)=1$, i.e.~such that $k(t,\tu y)=1$.
\\
Finally, regarding item $3$, a function defined by ODE$_{\#}$ from $g$ and $k$ computes precisely
\begin{align*}
f(x,\tu y) &=\sum^{x-1}_{u=-1} k_2(u,\tu y),
\end{align*}
that is counting $g(\tu y)=1$ and $t\in\{1,\dots, x-1\}$ such that $k(t,\tu y)=1$.
\end{proof}
%----------------------------------------------------%
%----------------------------------------------------%
%----------------------------------------------------%
%----------------------------------------------------%
\section{The polynomial time hierarchy}\label{app:PH}
%----------------------------------------------------%
%----------------------------------------------------%
In this Section, we generalize Th.~\ref{th:NP} and provide the first ODE-based characterizations for the levels of $\cc{PH}$.

\begin{defn}\label{def:complexity_classes}
    %Let $n\ge 1$. The polynomial and counting hierarchy are inductively defined respectively as:
    The polynomial hierarchy is defined as $\cc{PH} := \bigcup_{n\geq0}\Sigma_n^p = \bigcup_{n\geq0}\Pi_n^p$, where, for $n\ge 1$:
    \begin{align*}
    \Sigma^p_0 &:= \Pi^p_0 := \cc{P} \quad \quad \quad \Sigma^p_{n+1} := \NP^{\Sigma^p_n} \quad \quad \quad \Pi^p_{n+1} := \coNP^{\Sigma^p_n}
    \end{align*}
\end{defn}

\noindent
The characterization of these classes relies on ODE$_\vee$ and ODE$_\wedge$ (resp.), which, as seen, intuitively allows one to perform existential and universal bounded search over \emph{all possible} computation paths of a NTM, checking the acceptance of which is in $\AC^0$.

First, let us consider the dual of Th.~\ref{th:NP}:

\begin{theorem}\label{th:coNP}
$\AC^0 \cup \coNP = [\ACDL; \circ_0, \emph{ODE}_\wedge]$.
\end{theorem}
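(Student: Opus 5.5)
We argue exactly as in the proof of Th.~\ref{th:NP}, dualizing existential search into universal search, so that ODE$_\vee$ is replaced by ODE$_\wedge$ throughout. Since $\ACDL=\AC^0$~\cite{MFCS24}, both inclusions reduce to handling restricted composition $\circ_0$ and the new schema ODE$_\wedge$.

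\emph{Soundness} ($\supseteq$). We proceed by induction on the derivation in $[\ACDL;\circ_0,\text{ODE}_\wedge]$. Base functions lie in $\AC^0$.

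For $\circ_0$, let $f=h(\tu g)$. By Def.~\ref{def:limitedComp}, every inner $g_i$ lies in $\ACDL=\AC^0$, and $h$ lies in $\AC^0$ or in $\coNP$ by induction. If $h\in\AC^0$, then $f\in\AC^0$ because $\AC^0$ is closed under composition. If $h\in\coNP$, then $f\in\coNP$: the complement of $h$ is in $\NP$, and $\NP$ is closed under precomposition with $\AC^0$ (indeed $\FP$) reductions.

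For ODE$_\wedge$ with data $g$ and $k_1$, we use Proposition~\ref{prop:ODEs}(1), or more precisely the product form $f(x,\tu y)=g(\tu y)\prod_{t<x}(1+k_1(t,\tu y))$. Since $g\in\{0,1\}$ and $k_1\in\{-1,0\}$, we get $f(x,\tu y)=1$ iff $g(\tu y)=1$ and $\forall t<x\;k_1(t,\tu y)=0$. With $g$ and $k_1$ in $\AC^0\cup\coNP$, this is a conjunction of a $\coNP$ predicate with a universal quantifier ranging over $t<x$. The quantifier has exponential range, but each $t$ is polynomially long, so it is a polynomially bounded universal quantifier applied to a $\coNP$ predicate. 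Since $\coNP$ is closed under such quantifiers, $f\in\coNP$.

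One point must be checked carefully. The outputs of $k_1$ and $g$ are $\{-1,0\}$- and $\{0,1\}$-valued, so the predicates ``$k_1=0$'' and ``$g=1$'' must be expressed in the right polarity. Membership of an $\AC^0$ function in $\coNP$ is immediate. For a $\coNP$ function, we read it as the characteristic function of a $\coNP$ set, so ``$k_1(t,\tu y)=0$'' is taken to mean $k_1$ is $0$ on a $\coNP$ set. This matches the convention of Th.~\ref{th:NP}.

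\emph{Completeness} ($\subseteq$). Clearly $\AC^0=\ACDL$ is contained in the algebra. Let $L\in\coNP$. Then $L^c$ is accepted by a poly-time NTM $M$, so $x\in L$ iff no $z$ with $\ell(z)=p(\ell(x))$ is an accepting computation of $M$ on $x$.

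By Th.~\ref{th:enc}, $V(z,x)=1_{\acc_M}(z,x)$ is in $\ACDL$, and hence so is $k_1(z,x):=-V(z,x)$, obtained by subtracting in $\ACDL$. Take $g(x):=1$ and define $F$ by ODE$_\wedge$. Then $F(w,x)=1$ iff $V(z,x)=0$ for all $z<w$.

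It remains to put this in the form $f(x)=F(r(x),x)$, where $r\in\ACDL$ is such that $\{z<r(x)\}$ covers all strings of length $p(\ell(x))$, for instance $r(x)=2^{p(\ell(x))}$, a $1$ followed by zeros, which is computable in $\AC^0$. Shorter $z$ cause no harm, since $V$ returns $0$ on them by the encoding's fixed length. This outer composition is permitted by $\circ_0$, because the inner functions $r$ and $\pi$ lie in $\ACDL$. Hence $1_L=F(r(x),x)$.

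The main obstacle is the soundness of composition. Def.~\ref{def:limitedComp} restricts only the inner functions, so closure of $\coNP$ under substitution of $\AC^0$ functions must be argued. One must also ensure that outputs of $\coNP$ functions (characteristic functions) combine correctly inside ODE$_\wedge$. The data $k_1$ must be $\{-1,0\}$-valued, and $-1_{A}$ for a $\coNP$ set $A$ should yield a predicate of the correct polarity. I would handle this by fixing, as in Th.~\ref{th:NP}, that a function is in $\coNP$ when its graph is decided by a universally quantified $\AC^0$ predicate, and then verifying closure of that normal form under each operation.
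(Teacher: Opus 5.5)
Your proof is correct and is essentially the paper's. Your $k_1=-V$, for an $\NP$ machine accepting $L^c$, is the paper's $V'-1$ for the dual $\coNP$ machine with $q_a$ and $q_r$ swapped; you gain the small bonus that ill-formed words are automatically neutral, so Th.~\ref{th:enc} applies verbatim, and your soundness argument simply spells out the closure properties the paper cites. The polarity issue you flag is moot: $\circ_0$ only places $\ACDL$ functions in inner position, so every function of the algebra outside $\ACDL$ is an ODE$_\wedge$-defined function precomposed with $\ACDL$ functions, hence $\{0,1\}$-valued, and any admissible $\{-1,0\}$-valued $k_1$ therefore already lies in $\ACDL$.
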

\begin{proof}
$(\supseteq)$ Follows from $\ACDL=\AC^0$~\cite{MFCS24} and the closure of $\coNP$ under ODE$_\wedge$ (bounded universal quantification) and restricted composition.
$(\subseteq)$ 
To show that computation by a $\coNP$-machine can be simulated in our class we proceed similarly to Th.~\ref{th:NP}.
For a $\coNP$-machine $M$, let $V'$ be the $\ACDL$ function such that $V'(z,x) = 1$ iff $z$ represents an accepting computation of $M$ on input $x$, or does not correctly represent a computation of $M$ on input $x$ (this is required as we want the scheme to be ``neutral'' with respect to words not representing computations).
The theorem follows by simply 
defining $f$ by ODE$_\wedge$ from $1$ and $V'(z,x)-1$.
\end{proof}

The characterizations of $\AC^0 \cup \NP$ and $\AC^0\cup \coNP$ can be naturally generalized to uniformly capture all levels in $\cc{PH}$.
Following~\cite{oitavem:22}, we use the notation $\cc{C} \equiv [\mathscr{B}; \mathcal{O}_1, \dots, \mathcal{O}_k]$ to mean that the complexity class $\cc{C}$ is characterized precisely by those functions in $[\mathscr{B}; \mathcal{O}_1, \dots, \mathcal{O}_k]$ which output only values in $\{0,1\}$.

\begin{theorem}\label{th:PHk}
Let $\mathbb{SDL}_0:=\mathbb{PDL}_{0}:=\ACDL$. For any $n\ge 0$,
\begin{align*}
    &\Sigma^p_{n+1} \equiv  \mathbb{SDL}_{n+1} := [\mathbb{SDL}_n; \circ_0, \emph{ODE}_\vee] = [\mathbb{PDL}_n; \circ_0, \emph{ODE}_\vee]&\\
    &\Pi^p_{n+1} \equiv \mathbb{PDL}_{n+1} := [\mathbb{SDL}_n; \circ_0, \emph{ODE}_\wedge] = [\mathbb{PDL}_n; \circ_0, \emph{ODE}_\wedge] & \\
    &\cc{PH} \equiv \mathbb{SDL} := [\ACDL; \circ, \emph{ODE}_\vee] = \mathbb{PDL} := [\ACDL; \circ, \emph{ODE}_\wedge]&
\end{align*}
\end{theorem}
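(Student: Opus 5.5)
The proof goes by induction on $n$, with Th.~\ref{th:NP} and Th.~\ref{th:coNP} as the base case $n=0$, since $\mathbb{SDL}_0=\mathbb{PDL}_0=\ACDL$. For the step we assume $\Sigma^p_n\equiv\mathbb{SDL}_n$ and $\Pi^p_n\equiv\mathbb{PDL}_n$. More precisely, we assume that the $\{0,1\}$-valued functions of these algebras are exactly the characteristic functions of $\Sigma^p_n$ (resp.~$\Pi^p_n$) sets, possibly together with $\AC^0$ functions, which coincide at this level. We then establish the $\Sigma^p_{n+1}$ line; the $\Pi^p_{n+1}$ line is dual, exchanging ODE$_\vee$ with ODE$_\wedge$ and $V$ with the ``neutral'' verifier $V'$ used in Th.~\ref{th:coNP}.

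\emph{Soundness} ($\supseteq$). We show by induction on the derivation in $[\mathbb{SDL}_n;\circ_0,\text{ODE}_\vee]$ that every $\{0,1\}$-valued function is in $\Sigma^p_{n+1}$. Base functions lie in $\mathbb{SDL}_n$ or $\mathbb{PDL}_n$, hence in $\Sigma^p_n\cup\Pi^p_n\subseteq\Sigma^p_{n+1}$. For restricted composition $h(\tu g)$ the inner $\tu g$ are base functions, so they are computable with a $\Sigma^p_n$ oracle (possibly with non-Boolean outputs, which we handle by bitwise oracle queries in poly time). Plugging their outputs into a $\Sigma^p_{n+1}$ predicate $h$ keeps us in $\NP^{\Sigma^p_n}$. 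For ODE$_\vee$, Proposition~\ref{prop:ODEs} shows that $f(x,\tu y)=1$ iff $g(\tu y)=1$ or $\exists u<x\, k(u,\tu y)=1$. Because the bound $x$ is an input, $u$ has polynomial length in $|x|$, and $\Sigma^p_{n+1}$ is closed under polynomially bounded existential quantification. The main care point is that nested ODE$_\vee$ could have unbounded-growth parameters, but outputs are in $\{0,1\}$ for ODE$_\vee$ and only base-level functions feed arguments. Since $\Sigma^p_n\subseteq\Pi^p_{n+1}$, the same argument also yields the equality of the versions over $\mathbb{SDL}_n$ and over $\mathbb{PDL}_n$.

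\emph{Completeness} ($\subseteq$). Let $L\in\Sigma^p_{n+1}$. By the quantifier characterization we may take $L=\{x:\exists z, |z|=p(|x|), R(z,x)\}$ with $R\in\Pi^p_n$. This form is preferable to simulating an oracle NTM, because it avoids encoding oracle answers in computation paths. By the induction hypothesis, $\chi_R\in\mathbb{PDL}_n$. As in Th.~\ref{th:NP}, we pad $z$ so that $R$ is $0$ on words shorter than $r(x)$ (with $\ell(r(x))=p(\ell(x))$). We then apply ODE$_\vee$ to $0$ and $\chi_R$, and obtain the final $f(x)=F(r(x),x)$ by restricted composition with $r\in\ACDL$ as the inner function, which $\circ_0$ permits. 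Dually, for the $\Pi^p_{n+1}$ line we write $L=\{x:\forall z\, R(z,x)\}$ with $R\in\Sigma^p_n$ and use $\chi_R\in\mathbb{SDL}_n$. This explains why the base needed is $\mathbb{PDL}_n$ for $\Sigma$ and $\mathbb{SDL}_n$ for $\Pi$. To get the stated equality with $\mathbb{SDL}_n$ as base instead, we note that $\Pi^p_n$ predicates are obtainable there because $\Pi^p_n\subseteq\Sigma^p_{n+1}$. Alternatively, we merge two adjacent same-type quantifiers into one ODE$_\vee$ over paired strings, using $\AC^0$ pairing.

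\emph{Union.} $\cc{PH}\equiv[\ACDL;\circ,\text{ODE}_\vee]$ follows because unrestricted composition lets us nest $\text{NEG}$ and ODE$_\vee$ arbitrarily, giving every level; conversely, any derivation has finite nesting depth and so lands in some $\Sigma^p_m$. The same holds for ODE$_\wedge$, using $\text{NEG}\circ\text{ODE}_\wedge\circ\text{NEG}$ to simulate ODE$_\vee$. The main obstacle I expect is the soundness bookkeeping for restricted composition with non-Boolean inner functions and the precise padding conventions, rather than the quantifier simulation itself.
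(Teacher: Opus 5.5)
Your overall architecture matches the paper's: induction on $n$; soundness via closure of $\Sigma^p_{n+1}$ under bounded existential search and under substitution of $\FP^{\Sigma^p_n}$ inner functions; completeness via $\Sigma^p_{n+1}=\exists\,\Pi^p_n$. But the completeness step for the algebra that actually \emph{defines} each level, $\mathbb{SDL}_{n+1}=[\mathbb{SDL}_n;\circ_0,\text{ODE}_\vee]$, is circular. You need characteristic functions of $\Pi^p_n$ sets inside this algebra, and you justify them by $\Pi^p_n\subseteq\Sigma^p_{n+1}$. A class inclusion gives membership in the algebra only once $\Sigma^p_{n+1}\subseteq\mathbb{SDL}_{n+1}$ is known, which is exactly what you are proving. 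Your fallback of merging like quantifiers cannot help either: ODE$_\vee$ applied to $\Sigma^p_n$ predicates gives only $\exists\exists\,\Pi^p_{n-1}=\Sigma^p_n$, so without a negation you never climb a level.

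The missing idea is that $\circ_0$ restricts only the \emph{inner} functions. The function $\fun{cosg}=1-\fun{sg}\in\ACDL\subseteq\mathbb{SDL}_n$ may serve as the outer function. For $A\in\Sigma^p_n$, the inner function $\chi_A$ lies in the base $\mathbb{SDL}_n$. Hence $\fun{cosg}(\chi_A)=\chi_{A^c}$ is a legal restricted composition in $\mathbb{SDL}_{n+1}$. This yields all $\Pi^p_n$ predicates, after which your ODE$_\vee$-plus-padding construction gives $\Sigma^p_{n+1}$. This is exactly the paper's argument, and it is the one negation of base functions per level that restricted composition is designed to permit. Dually, $\fun{cosg}$ applied to $\mathbb{PDL}_n$ functions gives $\Sigma^p_n$ inside $[\mathbb{PDL}_n;\circ_0,\text{ODE}_\wedge]$. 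The gap is not cosmetic. As written you never obtain $\Sigma^p_2\subseteq\mathbb{SDL}_2$, and from level $3$ on your $\Pi$-line also breaks, since it uses $\chi_R\in\mathbb{SDL}_n$ for $R\in\Sigma^p_n$.

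A smaller point concerns soundness, which is the obstacle you flagged yourself. Your induction hypothesis only concerns $\{0,1\}$-valued functions. Yet you use that the possibly non-Boolean inner functions $\tu g\in\mathbb{SDL}_n$ have polynomially bounded output and are computable in $\FP^{\Sigma^p_n}$, so that ``bitwise oracle queries'' work; nothing you assumed gives this. The paper carries a second, simultaneous hypothesis: the closure of $\mathbb{SDL}_n$ under composition is contained in $\FP^{\Sigma^p_n}$. This propagates easily. ODE$_\vee$ outputs are Boolean, so every non-Boolean function of $[\mathscr{B};\circ_0,\text{ODE}_\vee]$ lies in the compositional closure of $\mathscr{B}$. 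You should state this hypothesis explicitly and carry it through the induction.
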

\begin{proof}
Similarly to~\cite{oitavem:22} the proof is by induction on the structure of the new algebras, following the ideas of Th.~\ref{th:NP} and~\ref{th:coNP}.
%\textcolor{red}{(see App.~\ref{app:PH})}. 
We describe the proof for the algebras $\mathbb{SDL}_{n}$, as this already contains all relevant ingredients.
$(\subseteq)$
Assume by induction that $\mathbb{SDL}_{n} \equiv \Sigma_n^p$.
In defining $\mathbb{SDL}_{n+1}:=[\mathbb{SDL}_n; \circ_0, \text{ODE}_\vee]$, the restriction on composition (Def. \ref{def:limitedComp}) allows functions in $\mathbb{SDL}_n$ to be freely composed. Thus, we can use $\fun{cosg}$ ($= 1-\fun{sg})$ to obtain any characteristic functions in $\Pi_n^p$.
Thus, in $\mathbb{SDL}_{n+1}$ we can apply ODE$_\vee$ to characteristic functions of sets in $\Pi_n^p$, so we can perform existential search on $\Pi_n$ predicates, obtaining the class $\Sigma_{n+1}^p$ by~\cite[Th 8.8(e)]{balcazar:88}.
$(\supseteq)$
Here one has to work with two simultaneous induction hypothesis: that $\mathbb{SDL}_n\equiv \Sigma_n^p$, and that the closure of $\mathbb{SDL}_n$ under composition is contained in $\cc{FP}^{\Sigma_n^p}$, both of which hold for $\mathbb{SDL}_1$ due to Th.~\ref{th:NP}.
For the algebra $\mathbb{SDL}_{n+1}$, just as described in the previous inclusion, one has access to not only $\mathbb{SDL}_n$, but also its closure under composition, which by induction is contained in $\cc{FP}^{\Sigma_n^p}$.
Thus, since $\Sigma_{n+1}^p$ is closed under bounded existential search over $\cc{FP}^{\Sigma_n^p}$ functions \cite[Prop. $8.1$(f)]{balcazar:88}, and under restricted composition \cite[Prop. 4]{oitavem:22}, it follows that the $\{0,1\}$ functions in $\mathbb{SDL}_n$ characterize at most $\Sigma_{n+1}^p$.
Thus, its closure under composition is contained in $\cc{FP}^{\Sigma_n^p}$ as well.
\end{proof}

\noindent
Regarding the functions (not only predicates) characterized by $\mathbb{SDL}_{n}$ and $\mathbb{PDL}_{n}$, we conjecture that these correspond to bounded query classes over $\Sigma_n^p$ (see~\cite{buhrman:99}).
%----------------------------------------------------%
%----------------------------------------------------%
%----------------------------------------------------%
%----------------------------------------------------%
\section{The counting hierarchy}\label{app}
%----------------------------------------------------%
%----------------------------------------------------%
In this Section, we present the characterizations of the function version of the counting hierarchy. 
This hierarchy can be defined using relativized poly-time counting machines, which are NTMs returning the number of their accepting computation paths, written in binary (see~\cite{Valiant,wagner86}).

\begin{defn}\label{def:complexity_classes}
    The class $\sP$ consists of the functions computable by poly-time counting machines. 
    %Let $n\ge 1$. The polynomial and counting hierarchy are inductively defined respectively as:
    The counting hierarchy is defined as $\cc{FCH} := \bigcup_{n\geq0}\cc{FCH}_n$, where, for $n\ge 1$:
    \begin{align*}
    \cc{FCH}_0 &:= \cc{FP} \quad \quad \quad \quad \; \;
    \cc{FCH}_{n} := \sP^{\cc{FCH}_{n-1}}.
    \end{align*}
\end{defn}

\begin{theorem}\label{th:CH}\label{th:sP}
Let $\mathbb{CDL}_{0}:=\ACDL = \AC^0$. For any $n\ge 0$, 
\begin{align*}
    &\FCH_{n+1} = \mathbb{CDL}_{n+1} := [\mathbb{CDL}_{n}; \circ_0, \emph{ODE}_\#]&\\
    &\cc{FCH} = \mathbb{CDL} := [\ACDL,\circ, \emph{ODE}_{\#}].&
\end{align*}
\end{theorem}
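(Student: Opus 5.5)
I will argue by induction on $n$, mirroring the proof of Th.~\ref{th:PHk}. Throughout I maintain two simultaneous hypotheses: (i) $\mathbb{CDL}_n = \FCH_n$, and (ii) the closure of $\mathbb{CDL}_n$ under unrestricted composition is contained in $\FP^{\FCH_n}$, which equals $\FP^{\sP^{\FCH_{n-1}}}$ for $n\geq 1$. For $n=0$, (i) is $\ACDL=\AC^0$~\cite{MFCS24}, and (ii) holds since $\AC^0$ is closed under composition and contained in $\FP$.

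Strictly, $\FCH_0=\FP\supsetneq\AC^0$. So at level $0$ I will read the statement "up to the base", exactly as in Th.~\ref{th:NP}. At level $n+1$ the class $\sP^{\FCH_n}$ already contains $\FP^{\FCH_n}$, so this mismatch disappears from the first level on.

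\textbf{Inclusion $\mathbb{CDL}_{n+1}\subseteq \FCH_{n+1}$.} I proceed by structural induction on the new algebra.
\begin{itemize}
\item Functions of $\mathbb{CDL}_n$ lie in $\FCH_n$, and $\FCH_n\subseteq \sP^{\FCH_n}$ by (i).
\item For $f$ obtained by ODE$_\#$ from $g$ and $k$, Prop.~\ref{prop:ODEs}(3) gives $f(x,\tu y)=g(\tu y)+\sum_{u<x}k(u,\tu y)$. A counting machine with oracle access to what $g,k$ need can nondeterministically guess $u<x$ (polynomially many bits, since $x$ is part of the input) and accept iff $k(u,\tu y)=1$, with one extra branch for $g$. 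This uses the closure of $\sP^{\mathcal{O}}$ under exponential sums.
\item For restricted composition $h(\tu g)$, the inner $\tu g$ lie in $\mathbb{CDL}_n$. By (ii) they are computable in $\FP^{\FCH_n}$, and $\sP^{\FCH_n}$ is closed under precomposition with $\FP^{\FCH_n}$ functions: the machine first computes $\tu g$ deterministically with oracle queries, then runs the machine for $h$.
\end{itemize}
The delicate point is that $h$ itself may have been built by ODE$_\#$ over $\mathbb{CDL}_{n+1}$ functions. I will therefore strengthen the structural induction to assert membership in $\sP^{\FCH_n}$ with the oracle fixed at level $n$. Hypothesis (ii) for $n+1$ then follows because $\FP^{\sP^{\FCH_n}}$ is closed under composition.

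\textbf{Inclusion $\FCH_{n+1}\subseteq\mathbb{CDL}_{n+1}$.} Take $f\in\sP^{\FCH_n}$ via an oracle NTM $M$ running in time $p$. I follow the completeness sketch given after Th.~\ref{th:NP}. A candidate computation word $z$, as in Appendix~\ref{app:accept}, is extended to list every oracle query made together with a claimed answer. Checking the computation structure is in $\ACDL$ by Th.~\ref{th:enc}. Checking the answers requires the oracle function $o\in\FCH_n=\mathbb{CDL}_n$, applied to queries extracted from $z$ by an $\ACDL$ projection.

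Now $k(z,x)=V(z,x,o(\pi_1(z)),\dots)$, with $V\in\ACDL$, is not a restricted composition, because the inner function $o$ is not in $\ACDL$. I will instead use the following reorganisation: in $\mathbb{CDL}_{n+1}$ we may freely compose functions of $\mathbb{CDL}_n$ among themselves, since inner functions from the base are allowed. So I place the whole check inside level $n$. The plan is to show that $\mathbb{CDL}_n$ (for $n\geq1$) contains $V\circ(\mathrm{id},o\circ\pi)$, building it by pushing the $\ACDL$ verifier inside the ODE$_\#$ definition of $o$. Concretely, the predicate "the answer bits equal $o(q)$" is a comparison, and comparisons can be applied as outer functions with $o$ inner only if $o$ is in the base.

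If this fails, I fall back on the approach of~\cite{oitavem:22}. There, a single query can be handled by noting that $\sP^{\FCH_n}=\sP^{\FP^{\FCH_n}}$ with one query made at the end. One then guesses computation paths such that $k=\mathrm{eq}(\text{claimed},\,o(\text{query}))\cdot V$, with the needed inner functions all in $\mathbb{CDL}_n$. Finally, $F(r(x),x)=\sum_{z<r(x)}k(z,x)$ is defined by ODE$_\#$, and $f(x)=F(r(x),x)$ is obtained by composition with $r\in\ACDL$ inner.

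\textbf{Union and main obstacle.} The union statement $\FCH=\mathbb{CDL}$ follows because unrestricted composition collapses the levels: every function of $\mathbb{CDL}$ lands in some $\mathbb{CDL}_n$ closed under composition, which sits inside $\FP^{\FCH_n}\subseteq\FCH_{n+1}$, and the converse is immediate. I expect the main obstacle to be precisely the oracle-answer verification under restricted composition in the completeness direction. This is the step where the choice of $\circ_0$ must be shown strong enough, and I will first try to resolve it via the single-final-query normal form.
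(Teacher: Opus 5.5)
\textbf{There is a genuine gap in the completeness direction.} Your soundness direction and the union argument are fine. The completeness direction $\FCH_{n+1}\subseteq\mathbb{CDL}_{n+1}$ for $n\geq 1$ is left open, and the obstacle you name is not the real one. In $\mathbb{CDL}_{n+1}=[\mathbb{CDL}_n;\circ_0,\text{ODE}_{\#}]$, inner functions only need to lie in the base $\mathbb{CDL}_n$, not in $\ACDL$. Since $o\circ\pi\in\mathbb{CDL}_n$ (compose inside $\mathbb{CDL}_n$ with an $\ACDL$ projection as inner function), an expression $V(z,x,o(\pi(z)))$ with a \emph{fixed} number of oracle calls is already a legitimate $\circ_0$-composition one level up.

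Your first plan, pushing the check down into $\mathbb{CDL}_n=\FCH_n$, cannot work unless $\coNP\subseteq\NP$. A $\{0,1\}$-valued $\sP$ function defines a $\mathbf{UP}$ set. So already $\mathrm{eq}(0,\#\mathrm{SAT}(q))\in\sP$ would put $\overline{\mathrm{SAT}}$ into $\mathbf{UP}$. The check has to live at level $n+1$, which is exactly what restricted composition permits.

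\textbf{The real difficulty is the number of oracle queries.} A path of an oracle NTM makes polynomially many adaptive queries. So $V(z,x,o(\pi_1(z)),\dots,o(\pi_m(z)))$, with $m$ growing with $|x|$, is not a composition at all. Your fallback does not fix this. A single final query to an $\FP^{\FCH_n}$ oracle cannot serve as an inner function of $\circ_0$, because $\FP^{\FCH_n}$ is not known to be contained in $\mathbb{CDL}_n=\FCH_n$. Your hypothesis (ii) only gives an upper bound.

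\textbf{What is missing is a one-query normal form, which the paper uses.} You need each path to make a single query to an $\FCH_n$ function. By \cite[Lem.~3.2]{wagner86}, $f(x)=\sum_{u=0}^{p(x)}h'(h(u,x))$ with $h\in\FCH_n$, $p,h'\in\FP$, and $h'$ $\{0,1\}$-valued. Since $\FP\subseteq\sP=\mathbb{CDL}_1\subseteq\mathbb{CDL}_n$ for $n\geq 1$, $k(u,x)=h'(h(u,x))$ is a $\circ_0$-composition with inner $h\in\mathbb{CDL}_n$. ODE$_{\#}$ then sums it, and composing with the inner function $p+1$ yields $f$.

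Alternatively, you can keep your guess-and-verify encoding and pack all answers into one function $O(z)=\sum_j 2^{jB}o(q_j(z))$. This function lies in $\sP^{\FCH_{n-1}}=\FCH_n$. Then $E(z,x,O(z))$ with $E\in\ACDL$ is a valid restricted composition.
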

\begin{proof}
\begin{sloppypar}
First, consider the basic level $\FCH_1=\sP$.
    $(\supseteq)$ Since $\ACDL=\AC^0$~\cite{MFCS24}, the inclusion follows from the fact that $\sP$ is closed under ODE$_{\#}$~\cite[Th. 4.9]{wagner86} and restricted composition \cite[Lem. 9]{dal:22}.
    $(\subseteq)$ 
    If $f \in \sP$, then there is a poly-time NTM $M$ that, for every input $x$, $f(x) = \#\acc_{M,x}$.
    Let $V\in\ACDL$ be the function which, given $z$ and $x$, returns $1$ iff $z$ is an accepting computation of $M$ on input $x$.
    Define $F$ by ODE$_{\#}$ from $V(z,x)$, with $g(z)=0$. Then, there is a polynomial (on size) $r$ such that $F(r(x),x) = f(x)$, so $f\in\mathbb{CDL}_1$.

    Regarding the levels $\FCH_{n+1}$, the proof follows by induction, similarly to \cite{wagner86,vollmer:96,dal:22}.
    $(\supseteq)$ We rely on the closure of each level $\FCH_{n+1}$ under ODE$_{\#}$~\cite[Theorem 4.9]{wagner86} and restricted composition~\cite[Th. 17]{dal:22}.
    $(\subseteq)$ We use~\cite[Lem. 3.2]{wagner86}: each function in $\cc{FCH}_{n+1}$ can be written as $f(x) = \sum_{u=0}^{p(x)}h'(h(u,x))$, for $p,h'\in\cc{FP}$, with $h':\mathbb{N}\to\{0,1\}$ and $h\in\cc{FCH}_n$.
    Since $\cc{FP}\subseteq\sP$, for $n\geq 1$ we have that $p$ and $h'$ are in $\cc{FCH}_{n}$ so in $\cc{FCH}_{n+1}$. We conclude by construct $f(x) = \sum_{u=0}^{p(x)}h'(h(u,x))$ via ODE$_{\#}$ and restricted composition.
\end{sloppypar}
\end{proof}
%----------------------------------------------------%
%----------------------------------------------------%
%----------------------------------------------------%
%----------------------------------------------------%
\section{Modulo counting}\label{sec:modulo_counting}
%----------------------------------------------------%
%----------------------------------------------------%
The most common modulo class is $\oplus\cc{P}$, which can be characterized by NTMs accepting an input iff an odd number of their computation paths accept. 
Equivalently it can be seen as the set of functions returning the number of accepting computation paths of a NTM, modulo $2$.
This definition extends naturally to more general $\cc{FMod_{\text{m}}P}$, the class of functions computing the number of accepting computation paths of a NTM modulo $m$.

\begin{defn}
Let $m>1$. A function $f$ is  said to be defined by \emph{$\ODEcmod$} from $g: \Nat^p \to \{0,\dots, m-1\}$ and $k:\Nat^{p+1} \to \{0,1\}$ if it is the solution of the IVP with initial value $f(x,\tu y)=g(\tu y)$ and such that:
%        $$
%        \frac{\partial f(x,\tu y)}{\partial x} = -k_2(x,\tu y) \times n \times \Big\lfloor \frac{f(x,\tu y)}{n}\Big\rfloor + k_2(x,\tu y).
%        $$
\begin{align*}
    %f(0,\tu y) &= g(\tu y) \\
    \frac{\partial f(x,\tu y)}{\partial x} &= -m \times \bigg\lfloor \frac{f(x,\tu y)+k(x,\tu y)}{m}\bigg\rfloor + k(x,\tu y),
\end{align*}
\end{defn}

\begin{theorem} %\label{th:count}
For any $m>1$, $\cc{FMod_{\text{m}}P}\cup \AC^0=[\ACDL; \circ_0,$ \emph{$\ODEcmod$}$].$ 
\end{theorem}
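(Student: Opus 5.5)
We follow the template of Th.~\ref{th:NP} and Th.~\ref{th:sP}, after first pinning down what $\ODEcmod$ computes, in analogy with Prop.~\ref{prop:ODEs}.

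\textbf{Semantics of the schema.} Unfolding the definition gives
\[
f(x+1,\tu y)=\big(f(x,\tu y)+k(x,\tu y)\big)-m\Big\lfloor \tfrac{f(x,\tu y)+k(x,\tu y)}{m}\Big\rfloor=\big(f(x,\tu y)+k(x,\tu y)\big)\bmod m .
\]
Since $g$ takes values in $\{0,\dots,m-1\}$, an induction on $x$ yields
\[
f(x,\tu y)=\Big(g(\tu y)+\sum_{u=0}^{x-1}k(u,\tu y)\Big)\bmod m ,
\]
with every value in $\{0,\dots,m-1\}$. So the schema is exactly ``counting modulo $m$'': it is $\text{ODE}_\#$ followed by reduction mod $m$. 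Boundedness is automatic here, so no separate growth argument is needed.

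\textbf{($\supseteq$).} We argue by induction on the derivation, using $\ACDL=\AC^0$~\cite{MFCS24}.
\begin{itemize}
\item \emph{Schema case.} Suppose $f$ is obtained by $\ODEcmod$ from $g$ and $k$ in the algebra. By the restriction $\circ_0$, $k$ can only be an $\AC^0$ function (0/1 valued), possibly with an $\ODEcmod$-defined outer function composed with $\AC^0$ inner arguments. The bounded sum $\sum_{u<x}k(u,\tu y)$ over $u<x$ is $\#\mathbf{P}$-style counting: guess $u<x$ nondeterministically and accept iff $k(u,\tu y)=1$, which is polynomial in $|x|$. Adding $g$ (one extra path when $g=1$, or a constant number of paths) and reducing mod $m$ gives a $\ModkP$ function.
\item \emph{Closure requirements.} This uses that $\ModkP$ is closed under bounded counting of $\AC^0$ (indeed $\FP$) predicates and under restricted composition $h(\tu g)$ with $\tu g\in\AC^0$. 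The latter is immediate: precompute $\tu g$ deterministically and then run the machine for $h$.
\end{itemize}
One subtlety is nesting: the base class is $\ACDL$ and $\circ_0$ forbids inner functions outside $\ACDL$. Hence $k$ itself is never a $\ModkP$ function applied inside the schema, except as an outer function composed with $\AC^0$ arguments. If $k=h\circ\tu g$ with $h$ defined by $\ODEcmod$, then $k$ is $\{0,1\}$-valued only when $h$ is used as a predicate. To handle this, we must check closure of $\ModkP$ under bounded counting mod $m$ of its own 0/1 predicates. For prime $m$ we would cite the known self-lowness/closure results (e.g.\ $\oplus\mathbf{P}^{\oplus\mathbf{P}}=\oplus\mathbf{P}$). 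For composite $m$ this is the main obstacle. There we would try to avoid it by showing that a normal form suffices: every function of the algebra is either in $\AC^0$ or equals a single application of $\ODEcmod$ to an $\AC^0$ kernel composed with $\AC^0$ inner functions. Failing that, we would restrict attention to the genuinely needed closure.

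\textbf{($\subseteq$).} This follows Th.~\ref{th:sP}.
\begin{itemize}
\item Let $f(x)=\#\acc_{M,x}\bmod m$ for a poly-time NTM $M$.
\item By Th.~\ref{th:enc}, take $V\in\ACDL$ deciding $\acc_M$, and define $F$ by $\ODEcmod$ from $g=0$ and $k(z,x)=V(z,x)$.
\item With $r$ such that $\ell(r(x))=p(\ell(x))$ as in Th.~\ref{th:NP}, every computation word is below $1\#r(x)$, so $F(1\#r(x),x)=\#\acc_{M,x}\bmod m=f(x)$. The encoding must be injective so that each accepting path is counted exactly once.
\item Since $1\#r(x)$ is in $\ACDL$ (as used in Th.~\ref{th:NP}), the composition $F(1\#r(x),x)$ is a legal instance of $\circ_0$.
\end{itemize}
Finally, $\AC^0\subseteq$ the algebra trivially, which gives the stated equality.
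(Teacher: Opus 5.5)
Your route is the paper's: read $\ODEcmod$ as counting modulo $m$, obtain $(\subseteq)$ by applying the schema to the $\AC^0$ acceptance checker $V$ and evaluating at $1\#r(x)$, and obtain $(\supseteq)$ from closure properties of $\ModkP$. The gap is in $(\supseteq)$. You correctly note that $\circ_0$ restricts only the inner functions of compositions. So the kernel $k$, and the initial value $g$, of an $\ODEcmod$ instance may themselves be non-$\AC^0$ members of the algebra; for $m=2$ every $\ODEcmod$-defined function is $\{0,1\}$-valued, so this nesting certainly occurs. The paper's proof sidesteps this by asserting $k\in\ACDL$ ``by construction'', which Def.~\ref{def:limitedComp} does not actually give, so you are right that something must be said. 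However, you then call composite $m$ ``the main obstacle'' and offer only tentative fallbacks: self-lowness for prime $m$, and an unproved normal form otherwise. As written, closure under nested instances is not established. Your step ``add $g(\tu y)$ paths'' also silently assumes $g$ is poly-time computable, which fails when $g$ is itself defined by $\ODEcmod$.

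The missing idea is elementary and needs neither self-lowness nor primality: reduction modulo $m$ commutes with summation. Prove the following by induction: every function of the algebra is either in $\AC^0$ or of the form $F\bmod m$ with $F\in\sP$. Note also that every $\{0,\dots,m-1\}$-valued $\AC^0$ function has this form, taking $F$ to be the function itself. Restricted composition preserves the invariant, because $F(\tu g(\cdot))\in\sP$ whenever $\tu g\in\ACDL$: compute $\tu g$ first, then run the machine for $F$. For the schema, suppose $g=G\bmod m$ and $k=K\bmod m$ with $G,K\in\sP$. Then
\[
f(x,\tu y)=\Big(G(\tu y)+\sum_{u<x}K(u,\tu y)\Big)\bmod m .
\]
The quantity $G(\tu y)+\sum_{u<x}K(u,\tu y)$ is in $\sP$: branch once into the machine for $G$, or guess $u<x$ and run the machine for $K$. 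This closes the case uniformly in $m$. Your normal-form fallback would also work, since it amounts to flattening the nested sums into a single sum over tuples with an $\AC^0$ kernel. The invariant above makes that construction unnecessary.
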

\begin{proof}
$(\supseteq)$ All function and schemas of $\ACDL$ are in $\AC^0$~\cite{MFCS24}.
Suppose that $f$ is defined by ODE$_{\text{cmod-m}}$.
Observe that, for any $t\in \{0,\dots, x-1\}$, if $k(t+1, \tu y)=0$, then $f(t+1,\tu y)=f(t, \tu y)$; if $k(t+1,\tu y)=1$, then there are two possible cases:
$$
f(t+1,\tu y) = \begin{cases}
f(t,\tu y) + 1  \quad &\text{if } f(t,\tu y) < m - 1 \\
f(t, \tu y) - m \times \bigg\lfloor \frac{f(t,\tu y)+1}{m}\bigg\rfloor + 1 = 0 \quad &\text{if } f(t,\tu y) = m-1
\end{cases}
$$
Thus, $f(x,\tu y)=\sum^{x-1}_{i=-1}k(i,\tu y) \; \textsc{mod}\; m$, with $k(-1,\tu y)=g(\tu y)$.
Since, by construction, $k$ is in $\ACDL$ (Def.~\ref{def:limitedComp}) and our machine can count modulo $m$, this computation is doable in $\ModkP$.
$(\subseteq)$ As for Prop.~\ref{th:sP}, if $f\in \ModkP$, then there is a poly-time NTM $M$ returning the number of accepting paths modulo $m$.
This can be computed via ODE$_{\text{cmod-m}}$ from $k=V$, checking whether the encoded path is an accepting one.
\end{proof}
%----------------------------------------------------%
%----------------------------------------------------%
%----------------------------------------------------%
%----------------------------------------------------%
\section{Polynomial space}\label{sec:poly_space}
%----------------------------------------------------%
%----------------------------------------------------%
In order to capture $\cc{FPSPACE}$, we pass through a characterization for this class given by Clote. First, recall that a function $f$ is said to be defined by Concatenation Recursion on Notation (CRN) from $g$ and $k$ if $f(0,\tu y)=g(\tu y)$ and $f(s_i(x), \tu y) =s_{k_{i}(x,\tu y)}(f(x,\tu y))$~\cite{Clote1990}. 
Moreover, a function $f$ is said to be defined by $k$-Bounded Recursion ($k$-BR) from $g$ and $h$ if $f(0,\tu y)=g(\tu y)$ and $f(x+1,\tu y)=h(x,\tu y, f(x,\tu y))$, provided that $f(x,\tu y)\leq k$~\cite{Clote1990}. %See \cite{clote:99} for more details.

\begin{theorem}[\cite{clote:99}]\label{th:clote}
For $k\ge 4$, $\cc{PSPACE} \equiv [0, \ell, \fun{s}_0, \fun{s}_1, \fun{BIT}, \#,  \pi^p_i; \circ, \emph{CRN}, k\emph{-BR}]$.
\end{theorem}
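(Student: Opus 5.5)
This is Clote's characterization, cited from \cite{clote:99}, so within the paper it will be used as a black box. If I had to reprove it, I would establish two inclusions. Both are read through the convention $\equiv$: the $\{0,1\}$-valued functions of the algebra should be exactly the characteristic functions of $\cc{PSPACE}$ sets.

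\emph{Soundness ($\subseteq$).} I would argue by induction on the algebra that every function in it has output length polynomial in its input lengths and is computable in polynomial space. The base functions are clear; note that $\#$ has output length $\ell(x)\ell(y)$. Composition preserves both properties, since polynomial bounds compose and space can be reused by recomputing inner values bit by bit. CRN adds one bit per bit of $x$, so it is a polynomial-space loop over $\ell(x)$ steps, each calling $k_i$. The interesting case is $k$-BR. Although it runs for $x$ steps, exponentially many in $\ell(x)$, each intermediate value is at most the constant $k$. The loop therefore needs only a binary counter up to $x$ and $O(1)$ bits of state, plus space to evaluate $h$ on inputs of polynomial size. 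This is exactly polynomial space.

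\emph{Completeness ($\supseteq$).} I would fix a $\cc{PSPACE}$ machine and use the fact that $\cc{PSPACE}$ equals languages recognized by polynomial-size, width-$5$ (in fact width-$4$, hence $k\ge 4$) branching programs of exponential length. This is a uniform Barrington-style theorem: a $\cc{PSPACE}$ computation is a uniform circuit of polynomial depth, and Barrington's construction turns it into a width-$5$ permutation branching program of length $4^{\mathrm{depth}}$, i.e.\ exponential. The plan then has three steps. First, show that the instruction at step $t$ of this program (which input bit to read and which permutation to apply) is computable in the algebra from $t$ and $x$. The uniformity is very strong, essentially $\AC^0$-like in the bits of $t$, and it is obtainable with $\fun{BIT}$, $\#$, $\fun{s}_0$, $\fun{s}_1$ and CRN; these give $\AC^0$ in Clote's setting. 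Second, run the program with $k$-BR over $t<2^{p(\ell(x))}$, where $2^{p(\ell(x))}$ is built with $\#$. The current state lies in $\{0,\dots,k\}$, and $h$ applies the step-$t$ permutation using a bit of $x$ extracted by $\fun{BIT}$. Third, read off acceptance.

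The main obstacle is the completeness step, in particular making Barrington's reduction sufficiently uniform that the step-$t$ instruction is definable with CRN-level power. Getting width down to $4$ instead of $5$ would also need Clote's refinement or a permutation encoding trick. I would first try to decompose $t$ in base $4$ along the recursion tree of the commutator construction, reading each level's choice via $\fun{BIT}$.
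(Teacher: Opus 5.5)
The paper gives no proof of this statement. It imports the result from Clote and uses it as a black box, as you do, so the only question is whether your reproof sketch holds up. Your soundness half is fine. The overall completeness strategy is the natural one: run an exponentially long width-$5$ permutation program with $k$-BR, fetching the step-$t$ instruction by composition. But your width bookkeeping is off by one. $k$-BR with bound $k$ has the $k+1$ states $\{0,\dots,k\}$, as you yourself note, so $k=4$ already gives five states. That is exactly Barrington's width $5$, and it is why the threshold is $k\ge 4$. No ``width-$4$ refinement'' is needed. Looking for one would lead you astray, since $S_4$ is solvable and the commutator trick is unavailable there.

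The genuine gap is the uniformity step, which you assert rather than prove, and which fails for the construction you suggest.
\begin{itemize}
\item If you decompose $t$ in base $4$ along the plain recursion $P_gP_hP_g^{-1}P_h^{-1}$, there is no room for extra instructions. So the conjugations that retarget each child to its required $5$-cycle, and the multiplications by $\sigma^{-1}$ used for negation, must be folded into the leaf instructions.
\item The permutation applied at step $t$ is then an iterated product of polynomially many fixed permutations selected by the digits of $t$. On suitable digit patterns such products compute counts modulo some $q\ge 2$, so they are not in $\mathbf{AC}^0$ as functions of $t$. Hence $\fun{BIT}$, $\#$, $\fun{s}_0$, $\fun{s}_1$ and CRN cannot produce them. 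In the $\mathbf{NC}^1$ setting, where step indices have $O(\log n)$ bits, this is harmless; with polynomially many levels it is not.
\item You also leave the polynomial-depth circuit unspecified. For the computation tree of an alternating machine, even the gate type at a node requires simulating the machine along the path.
\end{itemize}

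A repair has two parts. First, take a Savitch-style reachability circuit for a machine with read-only input. Its node labels are $\mathbf{AC}^0$ in the path: the endpoints at a leaf are found by a ``last right turn'' selection, and each leaf is a local transition check involving one input bit. Second, handle the path-dependent permutation in one of two ways:
\begin{itemize}
\item Compute it with an inner $4$-BR of polynomial length. This is legitimate because composition is unrestricted and the image of each point under an $S_5$ product is itself a $5$-state recursion.
\item Rebuild the program with $P^4$ in place of $P^{-1}$, realize conjugation and negation by explicit constant instructions, and pad every level to $2^c$ equal-length blocks. The step-$t$ instruction then becomes genuinely $\mathbf{AC}^0$ in $(t,x)$.
\end{itemize}
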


Additionally, in order to characterize this class we will use so called $\ell$-ODE$_1$, a schema first introduced in~\cite{MFCS24} to characterize $\ACDL$ and whose computation intuitively correspond to iteratively left shifting the binary representation of a given number, each time possibly adding one in the last position.
For full details, see~\cite{MFCS24,MFCS25}.

\begin{defn}[$\ell$-ODE$_1$ schema]
    Given $g:\Nat^p \to \Nat$ and $k:\Nat^{p+2}\to \{0,1\}$, the function $f$ is said to be defined by $\ell$-ODE$_1$ from $g$ and $k$ if it is the solution of the IVP such that:
    \begin{align*}
        f(0,\tu y) &= g(\tu y) \\
        \frac{\partial f(x,\tu y)}{\partial \ell} &= f(x,\tu y)+k(x,\tu y).
    \end{align*}
\end{defn}

\noindent
Together with basic function in $\mathscr{B}$ (see Th.~\ref{th:PSPACE} below) this schema is enough to capture CRN.

To capture this class, we introduce a \emph{non-strict} ODE schema obtained deriving along $x$ and with a restricted linearity such that $A$ to take values in $\{-1,0\}$.

\begin{defn}[bODE schema]
    For $g:\Nat^p\to \Nat$, $k:\Nat^{p+1} \to \{0,1\}$ and $h:\Nat^{p+1} \to \Nat$, the function $f:\Nat^{p+1}\to \Nat$ is said to be \emph{bODE$^{\uparrow}$} definable from $g,k$ and $h$ if it is the solution of the IVP with initial value $f(0,\tu y)=g(\tu y)$ and such that:
    $$
    \frac{\partial f(x,\tu y)}{\partial x}=- k(x,\tu y) \times f(x,\tu y) + B(x,\tu y, h(x,\tu y), f(x,\tu y)),
    $$
    where $B$ is a $\fun{sg}$-polynomial expression (and $f$ occurs in it only under the scope of the sign function in it).
    If for any $x,\tu y$, $k(x,\tu y)=1$, then we call this schema \emph{bODE}.
\end{defn}

\begin{theorem}\label{th:PSPACE}
Let $\mathscr{B}= [\fun{0}, \fun{1}, \fun{sg}, \fun{BIT}, \ell, \times, -, \div 2, \#, \pi^p_i]$. Then,
$$
\cc{FPSPACE} = [\mathscr{B}; \circ, \emph{$\ell$-ODE$_1$}, \emph{bODE}] = [\mathscr{B}; \circ, \emph{$\ell$-ODE$_1$}, \emph{bODE}^\uparrow].
$$
\end{theorem}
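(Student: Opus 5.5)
The plan is to prove the chain $[\mathscr{B}; \circ, \ell\text{-ODE}_1, \text{bODE}] \subseteq [\mathscr{B}; \circ, \ell\text{-ODE}_1, \text{bODE}^\uparrow] \subseteq \cc{FPSPACE} \subseteq [\mathscr{B}; \circ, \ell\text{-ODE}_1, \text{bODE}]$. The first inclusion is immediate, since bODE is the special case of bODE$^\uparrow$ with $k\equiv 1$ (and $k\equiv 1$ is available via the constant $\fun{1}$).

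\emph{Upper bound.} I would argue by induction on the derivation, with a growth invariant carried along: every function $f$ in the algebra is polynomial-space computable and satisfies $\ell(f(\tu x))\le p(\ell(\tu x))$ for some polynomial $p$.
\begin{itemize}
\item The base functions, including the smash $\#$, satisfy the invariant, and composition preserves it.
\item For $\ell$-ODE$_1$, the value changes only $\ell(x)$ times, and each change is a left shift plus one bit. The length therefore grows additively, and the value can be computed bit by bit in polynomial space.
\item For bODE$^\uparrow$, the key observation is that when $k(x,\tu y)=1$ the equation reads $f(x+1,\tu y)=B(x,\tu y,h(x,\tu y),f(x,\tu y))$. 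Here $f$ occurs in $B$ only under $\fun{sg}$, so the new value is a $\fun{sg}$-polynomial in $x,\tu y,h$ together with one bit of information about $f$. When $k=0$ we instead get $f(x+1)=f(x)+B$.
\end{itemize}
The second case is the delicate one, since additive accumulation over exponentially many steps could blow up. Still, $f(x)$ is bounded by roughly $x\cdot\max B$ plus the last reset value, so its length stays polynomial in $\ell(x),\ell(\tu y)$. Computing $f(x,\tu y)$ is then a loop of $x$ iterations (up to $2^{\mathrm{poly}}$ many) whose state is a single polynomial-length number, plus recursive calls to $h$ and $k$, which are in $\cc{FPSPACE}$ by induction. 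Reusing space across iterations keeps the total polynomial.

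\emph{Lower bound.} I would invoke Th.~\ref{th:clote}. It suffices to show that the algebra contains:
\begin{itemize}
\item \textbf{The basic functions.} $\fun{s}_0(x)=2x$ and $\fun{s}_1(x)=2x+1$ are obtained from $\times$ and $+$; addition is recovered from $-$ and $\times$, or via $\ell$-ODE$_1$.
\item \textbf{CRN.} This is simulated by $\ell$-ODE$_1$, as indicated in the text: each time $\ell$ increases, we shift left and append the bit $k$, reading bits of $x$ with $\fun{BIT}$ and reversing as needed.
\item \textbf{$k$-BR.} A function bounded by the constant $k$ and defined by $f(x+1)=h(x,\tu y,f(x))$ is obtained through bODE, i.e.\ $f(x+1)=B(\dots)$. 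Because $f(x)\in\{0,\dots,k\}$, the dependence of $h$ on $f$ can be expressed as $\sum_{c\le k}[f=c]\cdot h(x,\tu y,c)$. The indicator $[f=c]$ can be written with $\fun{sg}$ applied to $f-c$ and $c-f$, so that $f$ appears only under $\fun{sg}$, while the terms $h(x,\tu y,c)$ become auxiliary functions passed as $h$-arguments of the schema.
\end{itemize}
Finally, Clote's characterization is for predicates ($\equiv$), so for the function class I would use the standard step: $f\in\cc{FPSPACE}$ has polynomially bounded length, and $f$ is recovered by CRN from the predicate ``bit $i$ of $f(\tu x)$'', which is in $\cc{PSPACE}$.

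The main obstacle is the $\fun{sg}$-encoding of $k$-BR. The paper's $B$ allows only one auxiliary function $h$, so I must package the $k+1$ values $h(x,\tu y,c)$ into one number (e.g.\ concatenated in fixed-width blocks) and extract them inside $B$. Since $B$ is only a $\fun{sg}$-polynomial, it cannot do that extraction itself. Two fixes are available:
\begin{itemize}
\item Use $k+1$ separate bODE-defined bits in a simultaneous system, encoding $f$ in unary.
\item Note that Clote's $k$-BR for PSPACE can be reduced to the case where $h$ is a $\{0,1\}$-valued table, for $k=4$ via Barrington-style width-5 permutation branching.
\end{itemize}
I would try the unary bit-vector encoding first. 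A secondary concern is the growth bound for bODE$^\uparrow$ with $k=0$ over exponentially many steps, handled above by the additive estimate.
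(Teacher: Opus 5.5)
Your route is essentially the paper's, and it is correct once one sub-step is fixed.

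\textbf{Where you match or improve on the paper.} For $\cc{FPSPACE}\subseteq$ the algebra you use the same ingredients: Clote's characterization, $\ell$-ODE$_1$ to simulate CRN, a bODE built from $\sum_{c\le k}[f=c]\cdot h(x,\tu y,c)$ to express $k$-BR, and reconstruction of an $\cc{FPSPACE}$ function from its $\cc{PSPACE}$ bit-predicates. For the converse you simulate the schemas iteratively with a polynomial-size state, as the paper does. In two places you are more careful than the paper:
\begin{itemize}
\item The paper's closure argument writes $f(x,\tu y)=B(x,\tu y,h(x,\tu y),f(x-1,\tu y))$, which is only the $k\equiv 1$ case. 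Your additive estimate $|f(x,\tu y)|\le|g(\tu y)|+x\cdot\max_{t<x}|B(t,\dots)|$ also covers the steps of bODE$^\uparrow$ with $k(t,\tu y)=0$. That is exactly what the second equality needs.
\item Your indicator $\fun{cosg}(f-c)\cdot\fun{cosg}(c-f)$ is correct. The paper's $\fun{sg}(f-(i+1))\cdot\fun{cosg}(f-i)$ is identically $0$.
\end{itemize}

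\textbf{The step that goes wrong.} The paper ignores the single-$h$ issue by tacitly using five auxiliary functions $h(i,\tu y)$; you were right to flag it, but your resolution fails. Your claim that an $\fun{sg}$-polynomial $B$ cannot unpack several values from one number is false here, because everything is bounded by the constant $k$. The fix:
\begin{itemize}
\item Set $H(x,\tu y)=\sum_{c=0}^{k}\min(h(x,\tu y,c),k)\,(k+1)^c<(k+1)^{k+1}$. This is a single function, obtained in the algebra by composition from the $h(\cdot,\cdot,c)$. Capping at $k$ is harmless, since along the actual recursion $h(x,\tu y,f(x,\tu y))=f(x+1,\tu y)\le k$.
\item Take
$B=\sum_{d<(k+1)^{k+1}}\sum_{c\le k}\fun{cosg}(H-d)\,\fun{cosg}(d-H)\,\fun{cosg}(f-c)\,\fun{cosg}(c-f)\cdot\mathrm{dig}_c(d)$,
where $\mathrm{dig}_c(d)$ is the numeral for the $c$-th base-$(k+1)$ digit of $d$.
\item Only $f$ must occur under $\fun{sg}$; $H$ may appear freely. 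So this is a legal instance of bODE with one auxiliary function.
\end{itemize}

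The fix you say you would try first does not work, for two reasons:
\begin{itemize}
\item The algebra has no simultaneous form of bODE.
\item Each unary bit $b_c(x+1)=\bigvee_{c'}\big(b_{c'}(x)\wedge[h(x,\tu y,c')=c]\big)$ still depends on all $k+1$ values of $h$, so the obstacle is only moved, not removed.
\end{itemize}
The Barrington-style reduction is viable, but it means reopening Clote's proof when finite case analysis already suffices.
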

\begin{proof}
    $(\subseteq)$ We start by showing that $[\mathscr{B}; \circ, \ell$-ODE$_1$, bODE] captures $\cc{PSPACE}$. By Th.~\ref{th:clote}, it is enough to show that, if $f$ is defined by $k$-BR from $g$ and $h$ (for $k\ge 4$), then it can be rewritten in $[\mathscr{B}; \circ, \ell$-ODE$_1$, bODE$]$.
For simplicity, let $k=4$. Then, let us consider $f_{br}(x,\tu y)$ defined as the solution of the IVP below:
\begin{align*}
    f_{br}(0,\tu y) &= g(\tu y) \\
    \frac{\partial f_{br}(x,\tu y)}{\partial x} &= - f_{br}(x,\tu y) + \sum^{4}_{i=0} h (i, \tu y) \times \fun{sg}(f_{br}(x,\tu y) - (i+1)) \times \fun{cosg}(f_{br}(x,\tu y)- i)
\end{align*}
where $h$ is precisely the function defining $k$-BR. 
Clearly, this is an instance of bODE and, since $\sum^{4}_{i=0}h(i,\tu y)\times \fun{sg}(f_{br}(x,\tu y)-(i+1)) \times \fun{cosg}(f_{br}(x,\tu y)-i)$ is simply a convoluted way to compute $\sum^{4}_{i=0}h(i,\tu y)\times (f_{br}(x,\tu y)=i)$, $f(x,\tu y)=f_{br}(x, \tu y)$.
Since our class contains $\cc{PSPACE}$ and can simulate a form of CRN (this is doable already by $\ACDL$, see~\cite{MFCS24}), it  contains $\cc{FPSPACE}$.
Let $F\in\cc{FPSPACE}$, such that the output size of $F$ is bounded by some polynomial $p:\mathbb{N}\to\mathbb{N}$.
Then, for $j\in\{0,1\}$, the sets $B_{F}^j:=\{(x,i)\,:\,\fun{BIT}(F(x),i) = j\}$ are decidable by $\cc{PSPACE}$ functions $H_0(x,i)$ and $H_1(x,i)$, which by the inclusion above are in our class.
Thus, for each \textit{size} $m\leq p(|x|)$, given an input $x$, we may compute $H_0(x,m)$ and $H_1(x,m)$, which will yield one of three possible answers for the $m$-th bit of $F(x)$: it is $0$, it is $1$, or $|F(x)|<m$.
Since $\ell$-ODE$_1$ allows one to simulate CRN, we can construct $F(x)$ by simply concatenating all its bits.

$(\supseteq)$ Since all functions in $\ACDL$ are already in $\cc{FPSPACE}$, we just have to show that $\cc{FPSPACE}$ is closed under bODE$^\uparrow$. Suppose that $f$ is defined by \emph{bODE}$^\uparrow$.
By Def.~\ref{def:ODE}, $f(x,\tu y)=B\big(x,\tu y, h(x,\tu y), f(x-1,\tu y)\big)$, where $f(x-1,\tu y)$ only occurs under the scope of the sign function.
Additionally, since $B$ is a $\fun{sg}$-polynomial expression, $f$ occurs in $B$ a fixed number of times for each recursive call, say $m$; that is, for any $t\in \{0,\dots, x-1\}$, $B(t,\tu y, h(t,\tu y), f(t,\tu y)) = B(t,\tu y, h(t,\tu y), \fun{sg}(B_1(t,\tu y, f(t-1,\tu y))), \dots, \fun{sg}(B_m(t,\tu y, f(t-1,\tu y))))$, where for any $i\in\{1,\dots, m\}$, $B_i$ is a $\fun{sg}$-polynomial expression.
Then, in order to compute $f(x,\tu y)$, we have to:
\begin{itemize}
    \itemsep0em
    \item first, compute $g(\tu y)$, which is in $\cc{FPSPACE}$ for hypothesis;
    \item then, at each step $t\in \{0,\dots,x-1\}$, we consider the value obtained in the previous call, say $z$, and, for any $i\in\{0,\dots, m\}$, we compute $\fun{sg}(B_i(z,t,\tu y))$; 
    this is doable in $\cc{FPSPACE}$. We store the result, a sequence $S(t)= b_0, ... b_{m-1} $ of $m$ bits (corresponding to $\fun{sg}(B_1(t,\tu y, z)), \dots, \fun{sg}(B_m(t,\tu y, z))$ respectively) in the memory;
    \item finally, we remove $z$ from the memory and overwrite with $B(t,\tu y, h(t,\tu y), S(t))$, computed using the results previously obtained and stored in the memory in the previous step; here, $h$ is in $\cc{FPSPACE}$ for hypothesis and $B$ is a $\fun{sg}$-polynomial expression.
\end{itemize}
Overall this yields a $\cc{FPSPACE}$ algorithm.
\end{proof}
%----------------------------------------------------%
%----------------------------------------------------%
%----------------------------------------------------%
%----------------------------------------------------%
\section{Polynomial time}\label{sec:poly_time}
%----------------------------------------------------%
%----------------------------------------------------%
In this Section, we introduce a new ODE-based characterizations for $\FP$.
To capture this class we consider the \emph{non-strict} version of $\ell$-ODE$_1$, the defining schema of $\AC^0$.
Notably, also $\ell$-ODE$^1$ is obtained deriving along $\ell$.

%and obtained deriving along $\ell$
%

\begin{defn}[$\ell$-ODE$^1$ schema]
    Let $g:\Nat^p\to \Nat$ and $k:\Nat^{p+1}\to \{0,1\}$. The function $f:\Nat^{p+1}\to \Nat$ is said to be \emph{$\ell$-ODE$^1$} definable from $g$ and $h$ if it is the solution of the IVP with initial value $f(0,\tu y)=g(\tu y)$ and such that:
    $$
    \frac{\partial f(x,\tu y)}{\partial \ell} = f(x,\tu y) + k(x,\tu y, f(x,\tu y)).
    $$
    %, where~$k$ occurs in it only under the scope of $\fun{sg}$.
\end{defn}

\noindent
In other words, $\ell$-ODE$_1$ is a very special, but generalized case of $\ell$-ODE \cite{BournezDurand}, where $A=1$ and $B$ is a \emph{generalized} form of $\fun{sg}$-polynomial expression and takes only values in $\{0,1\}$ (or, purely syntactically, is an expression occurring under the scope of the sign function).
In terms of recursion schemas, we show that just as $\ell$-ODE$_1$ captures CRN~\cite{Clote1990}, the computation performed by $\ell$-ODE$^1$ corresponds to Full Concatenation Recursion on Notation (FCRN), which, even being a very restricted form of BRN, still gives a characterization of $\cc{FP}$.
Recall that a function $f$ is defined by (F)CRN from $g$ and $h_i$ if $f(0,\tu y)=g(\tu y)$ and $f(s_i(x), \tu y) =s_{k_{i}(x,\tu y)}(f(x,\tu y))$ (resp., $f(s_i(x), \tu y) =s_{k_{i}(x,\tu y, f(x,\tu y))}(f(x,\tu y))$), $i\in \{0,1\}$ and $h_i:\Nat^p \to \{0,1\}$~\cite{Ishihara}.
% , in the same way . , where a
%Additionally, computation performed by $\ell$-ODE$^1$ captures to full concatenation recursion on notation (FCRN)~\cite{Ishihara}, in the same way as $\ell$-ODE$_1$ captures concatenation recursion on notation (CRN)~\cite{Clote1990}, where a function $f$ is defined by (F)CRN from $g$ and $k$ if $f(0,\tu y)=g(\tu y)$ and $f(s_i(x), \tu y) =s_{k_{i}(x,\tu y)}(f(x,\tu y))$ (resp., $f(s_i(x), \tu y) =s_{k_{i}(x,\tu y, f(x,\tu y))}(f(x,\tu y))$), $i\in \{0,1\}$ and $h:\Nat^p \to \{0,1\}$.

This schema is the key to characterize $\FP$.
The simplest proof follows an indirect path: on the one hand, $\ell$-ODE$^1$ is simply a special case of BRN, ensuring the closure property holds; on the other hand, (F)CRN can be rewritten using our schema, which yields completeness~\cite{Ishihara}.

%for this schema, which can be seen as nothing but a special case of BRN (see~\ref{app:FPSpace}).
%Based on this observation and on~\cite{Ishihara} a simple (indirect) characterization proof for poly-time computable functions can be provided (see App.~\ref{app:FPSpace}).
%
%Whereas the first is used to characterize $\AC^0$~\cite{Clote1990}, the second characterizes $\cc{FP}$~\cite{Ishihara}.
%Then, $\ACDL$ combined with $\ell$-ODE$^1$ and full composition, is sufficient to capture $\FP$. 

\begin{theorem}[\cite{Ishihara}]\label{th:Ishihara}
    $\FP = [\fun{0}, \fun{s}_0, \fun{s}_1, \fun{mod2}, \fun{msp}, \#, \pi^i_p; \circ, \emph{FCRN}]$
\end{theorem}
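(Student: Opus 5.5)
The statement is Ishihara's characterization of $\FP$: the closure of $\fun{0}, \fun{s}_0, \fun{s}_1, \fun{mod2}, \fun{msp}, \#$ and projections under composition and full concatenation recursion on notation is exactly $\FP$. The plan is to prove the two inclusions separately.

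\emph{Soundness} ($\subseteq$). Argue by induction on the derivation of a function in the algebra. The base functions are clearly poly-time, and $\#$ gives output length $|x|\cdot|y|+1$, so every function is polynomially length-bounded. Composition preserves both poly-time computability and polynomial length bounds. For FCRN, computing $f(x,\tu y)$ takes $|x|$ steps. Each step calls $k_i$, which is in $\FP$ by the induction hypothesis, on $(x',\tu y, f(x',\tu y))$, where $x'$ is a prefix of $x$, and appends one bit. Hence $|f(x,\tu y)|\le |g(\tu y)|+|x|$. So the intermediate values have polynomial length, every call to $k_i$ runs in polynomial time, and $f\in\FP$. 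The key point is that FCRN adds only one bit per step, which keeps the unbounded-looking recursion safe without any explicit bound.

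\emph{Completeness} ($\supseteq$). Use Cobham's characterization, or simulate a poly-time TM directly. The direct route seems cleanest:
\begin{enumerate}
\item Encode a configuration of the machine as a binary word of polynomial length.
\item Build the whole computation history $c_0c_1\cdots c_{T}$ by a single FCRN run over a recursion parameter $w$ with $|w| = T\cdot(\text{config length})$. Such a $w$ exists via $\#$ iterated with composition. Each new bit depends on the bits produced so far: the bit at position $j$ of configuration $i+1$ is determined by a constant-size window of configuration $i$, namely the bits at offsets around $j$ minus the configuration length.
\item Let the step function $k_i(w',\tu y,f)$ read that window from the accumulated string $f$. It also needs the current position, which is $|w'|$.
\item Extract the output from the final configuration, again by FCRN or by $\fun{msp}$-based trimming.
\end{enumerate}

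The \emph{main obstacle} is step 3: performing the bit extraction inside the algebra. We need $\fun{BIT}$-like access, i.e.\ selecting a bit of $f$ at a position computed from $|w'|$. This has to be built from $\fun{msp}$ (most significant part / right shift), $\fun{mod2}$, and an auxiliary CRN that produces shift amounts. I would first define, by CRN (a special case of FCRN that ignores $f$), the functions $|x|$ in unary and then "$\fun{msp}(x,y)$ = $x$ shifted by $|y|$". From these I would obtain $\fun{BIT}(x,i)$ and finite lookup tables by case distinction using $\fun{mod2}$ and conditional composition. With $\fun{BIT}$ available, the local transition predicate is a finite Boolean combination, just as in the $\AC^0$ check of accepting computations in the appendix. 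This makes the step function definable, and the completeness direction follows.
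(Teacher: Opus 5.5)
The paper does not prove this statement. It is quoted from Ishihara and used only as a black box in the $(\subseteq)$ direction of Theorem~\ref{th:FP}, so your outline is a self-contained replacement for the citation. Read that way, its architecture is correct and the route is genuinely different from the paper's. Your soundness argument is the same one the paper gives for $\ell$-ODE$^1$ in Proposition~\ref{prop:lODE}: one bit is appended per step, so $|f(x,\tu y)|\le|g(\tu y)|+|x|$, and there are only $|x|$ calls to a polynomial-time $k_i$ on polynomially long inputs. For completeness, writing out the computation tableau bit by bit with FCRN is a standard alternative to simulating Cobham's bounded recursion on notation. The citation buys brevity. Your route buys transparency: it shows directly why ``append one bit computed from the accumulated value'' already reaches $\FP$. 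Rerun with $\ell$-ODE$^1$ in place of FCRN and with $\AC^0$ step functions from $\ACDL$, where reading a bit at a given offset or computing an alignment modulo a constant is immediate, it would essentially give a direct proof of Theorem~\ref{th:FP}. That proof would bypass Ishihara's base functions, including the separate treatment of $\fun{mod2}$.

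Two points in your sketch need sharpening.

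First, the encoding must be a Cook--Levin tableau with the state stored in the cell under the head and with end-marker cells. The encoding of Appendix~\ref{app:accept} does not work for generation. There, $s_i$ is stored once at the end of each configuration, at a distance from the current write position that depends on $j$. Also, $s_{i+1}$ depends on the scanned symbol wherever the head is. So with that encoding the needed bits are not in a constant-size window at a fixed offset, contrary to your step 2 and to your closing appeal to the appendix.

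Second, with the right encoding you do not need $\fun{BIT}$ at positions computed from $|w'|$. FCRN appends at the low-order end, so the bits of $c_i$ that determine the next bit of $c_{i+1}$ sit at positions $L-1+\delta$ counted from the least significant bit of the accumulated value, for $\delta$ in a fixed finite set. Each such bit is $\fun{mod2}(\fun{msp}(f,z_\delta))$ with $|z_\delta|=L-1+\delta$, and $z_\delta$ depends only on $\tu y$. Lengths are added by concatenation (itself a CRN), multiplied up to an additive constant by $\#$, and subtracted by $\fun{msp}$. Note that $\fun{msp}$ is a base function, so you need not define it by CRN.

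What genuinely needs an argument is the alignment of the current bit inside its $b$-bit cell block. The same issue arises when spreading the input into blocks for $g(\tu y)$ and when collecting the output at the end. Since $|w'|$ is not available as a binary number in this algebra, do not compute $|w'|\bmod b$ arithmetically. Either use a self-synchronizing block code, so that alignment is read off a constant number of low-order bits of $f$, each obtained by $\fun{mod2}$ after a constant shift. Or let an auxiliary FCRN that inspects its own output generate a periodic marker pattern.
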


\noindent
The following remark also helps simplify the proof.

\begin{remark}\label{remark:FP}
    $[\ACDL; \circ, \ell\emph{-ODE}^1]  = [\fun{0}, \fun{1}, \ell, \fun{sg}, +, -, \div 2, \#, \pi^p_i; \circ, \ell\emph{-ODE}^1]$
\end{remark}
\begin{proof}[Proof Sketch]
    Follows from the fact that basic functions are the same and $\ell$-ODE$_1$ is nothing but a restricted version (namely, the strict counterpart) of $\ell$-ODE$^1$ such that $f(x,\tu y)$ does not appear in $k$ at all.
\end{proof}

\noindent
Then, as mentioned, we prove Th.~\ref{th:FP} indirectly.

\begin{proposition}\label{prop:lODE}
    If $f$ is defined by $\ell$\emph{-ODE}$^1$ from $g$ and $h$ in $\FP$, then $f$ is in $\FP$ as well.
\end{proposition}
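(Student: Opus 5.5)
We prove closure of $\FP$ under $\ell$-ODE$^1$ by unfolding the discrete equation and observing that it performs a left shift with one appended bit at each change of $\ell$. By definition of $\lambda$-ODE with $\lambda=\ell$, we have $f(x+1,\tu y)=f(x,\tu y)+(\ell(x+1)-\ell(x))\times\big(f(x,\tu y)+k(x,\tu y,f(x,\tu y))\big)$. Since $\ell(x+1)-\ell(x)\in\{0,1\}$, there are two cases. If $\ell(x+1)=\ell(x)$, then $f(x+1,\tu y)=f(x,\tu y)$. Otherwise $f(x+1,\tu y)=2f(x,\tu y)+k(x,\tu y,f(x,\tu y))$, which is $\fun{s}_{k(x,\tu y,f(x,\tu y))}(f(x,\tu y))$.

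The value of $f$ therefore changes only at the points $x$ with $x+1=2^j$, that is, $x=2^j-1$ for $j\geq 0$. Writing $F(j,\tu y):=f(2^j-1,\tu y)$ and noting that $f(x,\tu y)=F(\ell(x),\tu y)$, we get the recurrence
\[
F(0,\tu y)=g(\tu y),\qquad F(j+1,\tu y)=2F(j,\tu y)+k(2^{j}-1,\tu y,F(j,\tu y)).
\]
Here $g$ and $k$ are in $\FP$; this is the hypothesis, with $h$ in the statement playing the role of $k$. The number of iterations is $\ell(x)\le |x|$, which is linear in the input size.

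The key step, and essentially the only one needing care, is the size bound, which rules out the usual growth explosion of iterating an $\FP$ function. Since $k$ takes values in $\{0,1\}$, each step appends exactly one bit: $|F(j+1,\tu y)|\le |F(j,\tu y)|+1$. Hence $|f(x,\tu y)|\le |g(\tu y)|+\ell(x)$, a polynomial (indeed linear) bound in the input length.

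The algorithm computes $g(\tu y)$ and then runs a loop for $j=0,\dots,\ell(x)-1$. In each pass it computes $2^j-1$ (of size $j$), evaluates $k(2^j-1,\tu y,F(j,\tu y))$ in time polynomial in $|x|+|\tu y|+|g(\tu y)|+j$, and appends the resulting bit. This gives $\ell(x)$ calls of an $\FP$ function on arguments of polynomial size, so the total time is polynomial and $f\in\FP$.

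Alternatively, one can observe that the recurrence is an instance of bounded recursion on notation (in fact FCRN with respect to $\ell$), with bounding function $2^{\ell(x)}\cdot(g(\tu y)+1)$. Cobham's closure of $\FP$ under BRN then gives the result, consistent with the remark above that $\ell$-ODE$^1$ is a special case of BRN.
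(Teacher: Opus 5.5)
Your proof is correct and follows the same route as the paper. Both compute $g(\tu y)$, then repeatedly evaluate the $\{0,1\}$-valued $k$ on the current value and append the resulting bit; your BRN remark matches the soundness argument given in the proof of Theorem~\ref{th:FP}. Your reindexing via $F(j,\tu y)=f(2^j-1,\tu y)$ also tightens the paper's sketch, which writes the loop over all $t\in\{0,\dots,x-1\}$ (literally exponentially many steps), whereas you show explicitly that only $\ell(x)$ iterations occur and that the output size is bounded by $|g(\tu y)|+\ell(x)$.
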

\begin{proof}[Proof idea]
    By Def.~\ref{def:ODE}, $f(x,\tu y)=\sum^{x-1}_{u=-1} 2^{x-(u+1)} \times k(u,\tu y, f(u,\tu y))$, where $k(-1, \cdot, \cdot)=g(\tu y)$.
    First,  $g(\tu y)$ is computed in polynomial time (for hypothesis). Then, for any $t\in\{0,\dots, x-1\}$, where $z$ the value obtained  ``so far'', we compute $k(t,\tu y,z)=b_t (\in \{0,1\})$ (again, doable in $\FP$ for hypothesis), and  concatenate the resulting bit to $z$.
\end{proof}

\begin{theorem}\label{th:FP}
$\FP = [\ACDL; \circ, \ell\emph{-ODE}^1]$.    
\end{theorem}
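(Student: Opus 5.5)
The plan is to prove the two inclusions separately, using the indirect route already set up: Prop.~\ref{prop:lODE} gives soundness and Th.~\ref{th:Ishihara} gives completeness.

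\emph{Soundness} ($\supseteq$). We argue by induction on the construction of functions in $[\ACDL; \circ, \ell\text{-ODE}^1]$.
\begin{itemize}
\item The base functions of $\ACDL$ are in $\AC^0\subseteq\FP$, since $\ACDL=\AC^0$~\cite{MFCS24}.
\item $\FP$ is closed under full composition.
\item Closure under $\ell$-ODE$^1$ is exactly Prop.~\ref{prop:lODE}.
\end{itemize}
For the last point we should make sure that the $\ell$-ODE$_1$ schema inside $\ACDL$ is covered as well. This is immediate: it is the strict special case of $\ell$-ODE$^1$ (Remark~\ref{remark:FP}), or alternatively it already lies in $\AC^0$.

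One point in Prop.~\ref{prop:lODE} must be checked. The recursion runs over the $\ell(x)$ length-change steps, not over $x$ steps. At each such step the value is shifted left and one bit is appended. Hence $|f(x,\tu y)|\le |g(\tu y)|+\ell(x)$, the number of iterations is polynomial in the input size, and each step is a single polynomial-time evaluation of $k$. So the whole computation is in $\FP$.

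\emph{Completeness} ($\subseteq$). By Th.~\ref{th:Ishihara}, it suffices to show four things.
\begin{itemize}
\item The initial functions $\fun{0},\fun{s}_0,\fun{s}_1,\fun{mod2},\fun{msp},\#,\pi$ are in $\ACDL$. Here $\fun{s}_0(x)=2x$, $\fun{s}_1(x)=2x+1$, $\fun{mod2}(x)=x-2(x\div 2)$, and $\fun{msp}$ (the most significant part, i.e.\ a right shift by a given amount) is $\AC^0$. All are available by~\cite{MFCS24} and Remark~\ref{remark:FP}.
\item The algebra is closed under composition, which holds by definition.
\item FCRN can be simulated by $\ell$-ODE$^1$.
\end{itemize}
For the simulation, take $f$ defined by FCRN from $g$ and $h_0,h_1$. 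We want $F$ defined by $\ell$-ODE$^1$ from $g$ and
$$k(x,\tu y,z)=\fun{BIT}\text{-}\mathrm{case}\big(h_0(\cdot,\tu y,z),h_1(\cdot,\tu y,z)\big).$$
The bit of $x$ being consumed at each length-change step selects $h_0$ or $h_1$, and the schema's own left shift plus bit addition realizes $s_{h_i(\cdot)}(f)$.

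\emph{Main obstacle: the direction mismatch.} FCRN recurses on notation from the most significant bits: $f(s_i(x))$ uses $f(x)$, where $x$ is the prefix. The $\ell$-ODE$^1$ schema, by contrast, steps through the values $x'\le x$ at which $\ell$ increments, namely $x'=2^j-1$, and feeds $k$ the argument $x'$ rather than a prefix of the real input. So the iteration parameter has to be decoupled from the data. We would define
$$F(t,x,\tu y)\quad\text{by $\ell$-ODE$^1$ in $t$},$$
where at step $j$ the function $k$ computes the prefix $\fun{msp}(x,\ell(x)-j)$ of $x$ and its last bit (both in $\AC^0$) and applies the appropriate $h_i$ to that prefix and the current value $z$. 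Then $f(x,\tu y)=F(x,x,\tu y)$.

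Two details remain. First, we must verify that $t=x$ yields exactly $\ell(x)$ steps, which is precisely what the $\ell$-derivative provides. Second, the initial step must be matched so that $F(0,\cdot)=g$ and the first appended bit corresponds to the leading bit of $x$. This off-by-one alignment, together with checking that $k$ remains in the algebra when it takes $f$ as an argument, is where I expect the care to be needed.
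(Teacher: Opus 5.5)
Your proposal is correct and follows essentially the paper's route. Soundness comes from closure of $\FP$ under $\ell$-ODE$^1$ (Prop.~\ref{prop:lODE}), and completeness comes from Ishihara's FCRN characterization, simulating FCRN by an $\ell$-ODE$^1$ in a separate iteration variable evaluated on the diagonal, just as the paper does with $f_{fcrn}(x,y,\tu y)$ and $f(x,\tu y)=f_{fcrn}(x,x,\tu y)$. The differences are minor: the paper builds $\fun{mod2}$ through a dedicated $\ell$-ODE$^1$ instance rather than reading it off $\ACDL$ as $x-2(x\div 2)$ (your direct route suffices for $\fun{mod2}(x)=x \bmod 2$), and your explicit passing of the prefix of $x$ to $h_i$ is more careful than the paper's formula, which feeds $h_i$ the iteration variable itself.
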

\begin{proof}
Actually, due to Remark~\ref{remark:FP}, we will prove
$$
\FP = [\fun{0}, \fun{1}, \ell, \fun{sg}, +, -, \div 2, \#, \pi^p_i; \circ, \ell\text{-ODE}^1]
$$
    $(\subseteq)$ 
    All basic functions except $\fun{mod2}$ has been proved to be expressible already via $\ACDL$.
    In order to deal with $\fun{mod2}$, we consider $f_{mod2}$ defined as the solution of the IVP below:
    \begin{align*}
        f_{mod2}(0, y) &= \fun{BIT_0}(y) \\
        \frac{\partial f_{mod2}(x,y)}{\partial \ell} &= f_{mod2}(x,y) +  \fun{sg}(\fun{BIT}(f_{mod2}(x,y), 0)) \times \fun{cosg}(\fun{BIT}(x,\ell(x)+1)) \\
        &\quad \quad \quad \quad \quad \quad 
        +
        \fun{cosg}(\fun{BIT}(f_{mod2}(x,y), 0)) \times \fun{BIT}(x,\ell(x)+1)
    \end{align*}
    where $\fun{BIT_0}(z) = \fun{BIT}(z,0)$ and $\fun{BIT}(w, v)$ is the function intuitively returning the $v$-th bit of the binary representation of $w$; notably $\fun{BIT}(w,v)=\fun{bit}(w,\fun{bexp}(w,v)-1)$, where $\fun{bit}(w,v)=\fun{msp}(v,w)-2\times \fun{msp}(2v+1, w)$ and $\fun{bexp}(w,v)$ is the bounded exponential function that, for any $v\leq \ell(w)$ returns $2^v$ and definable using $\fun{msp}, \fun{if}$ and 2$^{\ell(\cdot)}$, together with $\ell$-ODE$^1$ (for further details see~\cite{MFCS24}).
    Clearly, this is an instance of $\ell$-ODE$^1$.
    We conclude by considering $\fun{BIT}(f_{mod2}(x,x),0)=\fun{mod2}(x)$.
    Finally, we deal with FCRN.
    Let $f$ be defined by FCRN from $g$ and $h_i$, with $i\in\{0,1\}$.
    This can be rewritten as the solution of the IVP below:
    \begin{align*}
        f_{fcrn}(0, y, \tu y) &= g(\tu y) \\
        \frac{\partial f_{fcrn}(x,y, \tu y)}{\partial \ell} %&= f_{fcrn}(x,\tu y) + h_{\fun{BIT}(\ell(y)-\ell(x)-1,y)}(x, \tu y, f_{fcrn}(x,\tu y)) \\
        &= f_{fcrn}(x,\tu y) + \fun{sg}\big(h_{\fun{BIT}(\ell(y)-\ell(x)-1,y)}(x, \tu y, f_{fcrn}(x,\tu y))\big)  \\
        &= f_{fcrn}(x,\tu y) + \fun{sg}(h_0(x,\tu y, f_{fcrn}(x,\tu y))) \times \fun{cosg}(\fun{BIT}(\ell(y)-\ell(x)-1, y)) \\
        &\quad \quad \quad \quad \quad \quad + \fun{sg}(h_1(x, \tu y, f_{fcrn}(x,\tu y))) \times \fun{sg}(\fun{BIT}(\ell(y)-\ell(x)-1,y)).
    \end{align*}
    This is clearly an instance of $\ell$-ODE$^1$.
    Then, $f(x,\tu y)=f_{fcrn}(x,x,\tu y)$, as desired.
    This is enough to obtain $[\fun{0}, \fun{s}_0, \fun{s}_1, \fun{mod2}, \fun{msp}, \#; \circ, \text{FCRN}]\subseteq [\fun{0}, \fun{1}, \ell, \fun{sg}, +, -, \div 2, \#, \pi^p_i; \circ, \ell\text{-ODE}^1]$, which due to Th.~\ref{th:Ishihara} concludes our proof.
    \\
    $(\supseteq)$ Since all basic functions are already in $\ACDL (=\AC^0 \subset \FP)$~\cite{MFCS24} and due to the closure property of $\ell$-ODE$^1$ (Prop.~\ref{prop:lODE}). In particular, soundness is indirectly ensured by the fact that $\ell$-ODE$^1$ is a special case of BRN: if $f$ is defined by $\ell$-ODE$^1$ from $g$ and $k$, then it can be easily defined by BRN from $g$ and a function $h$, which performs iterated concatenations of either 0 or 1 (based on the value of $k$ on the recursive call). Poly-time bound for computation by this function is thus ensured. %For the idea of a direct proof see Prop.~\ref{prop:lODE}.
\end{proof}

%\noindent
%Soundness is indirectly ensured by the fact that $\ell$-ODE$^1$ is a special case of BRN: if $f$ is defined by $\ell$-ODE$^1$ from $g$ and $k$, then it can be easily defined by BRN from $g$ and a function $h$, which performs iterated concatenations of either 0 or 1 (based on the value of $k$ on the recursive call). Poly-time bound for computation performed by this function is thus ensured.

%----------------------------------------------------%
%----------------------------------------------------%
%----------------------------------------------------%
%----------------------------------------------------%

\end{document}